\documentclass[10pt, conference, compsocconf, final]{IEEEtran}
\pdfoutput=1
\usepackage[utf8]{inputenc}
\usepackage[numbers]{natbib}
\usepackage{graphicx}
\usepackage{algorithm}
\usepackage{algpseudocode}
\usepackage{mathtools}
\usepackage{amsfonts}
\usepackage{amsmath}
\usepackage{amsthm}
\usepackage{subfigure}
\usepackage{dirtytalk}
\usepackage{multirow}
\usepackage{xspace}
\usepackage{pbox}
\usepackage{soul}
\usepackage[bookmarks=false]{hyperref}
\usepackage{url}
\usepackage[usenames, dvipsnames]{color}
\usepackage{tikz}
\usepackage{tikzscale}
\usepackage{subfigure}
\usetikzlibrary{calc}
\usetikzlibrary{shapes}
\definecolor{pinegreen}{cmyk}{0.92,0,0.59,0.25}
\definecolor{royalblue}{cmyk}{1,0.50,0,0}
\definecolor{lavander}{cmyk}{0,0.48,0,0}
\definecolor{violet}{cmyk}{0.79,0.88,0,0}
\tikzstyle{cblue}=[circle, draw, thin,fill=cyan!20, scale=1]
\tikzstyle{cred}=[circle, draw, thin,fill=red!20, scale=1]
\tikzstyle{cgreen}=[circle, draw, thin,fill=green!20, scale=1]
\tikzstyle{qgre}=[rectangle, draw, thin,fill=green!20, scale=0.8]
\tikzstyle{rpath}=[thin, red, dashed, opacity=0.1]
\tikzstyle{legend_isps}=[rectangle, rounded corners, thin, 
                       fill=gray!20, text=blue, draw]
                        
\tikzstyle{labelt}=[rectangle, rounded corners, thin,
                           top color= white,bottom color=green!25,
                           black]
\tikzstyle{labelb}=[rectangle, rounded corners, thin,
                          top color= white,bottom color=cyan!25,
                          minimum width=2.5cm, minimum height=0.8cm,
                          royalblue]
\tikzstyle{labelv}=[rectangle, rounded corners, thin,
                          top color= white,bottom color=lavander!25,
                          minimum width=2.5cm, minimum height=0.8cm,
                        violet]

\newtheorem{theorem}{Theorem}

\newtheorem{definition}{Definition}
\newtheorem{corollary}{Corollary}

\DeclareMathOperator*{\argmax}{argmax}

\newcommand{\eat}[1]{}

\newcommand{\intv}[2]{\langle #1,#2\rangle}

\newcommand{\vsa}{\vspace*{-0.1cm}}
\newcommand{\vsb}{\vspace*{-0.2cm}}
\newcommand{\vsc}{\vspace*{-0.4cm}}

\newcommand{\threepartdef}[6]
{
	\left\{
		\begin{array}{lll}
			#1 & \mbox{if } #2 \\
			#3 & \mbox{if } #4 \\
			#5 & \mbox{if } #6
		\end{array}
	\right.
}

\setlength{\tabcolsep}{4pt}


\newcommand{\moc}{\text{OC}}
\newcommand{\oc}{$\moc$\xspace}

\newcommand{\mkoc}{\text{kOC}}
\newcommand{\koc}{$\mkoc$\xspace}

\newcommand{\mdp}{\text{DP}}
\newcommand{\dmp}{$\mdp$\xspace}

\newcommand{\mgr}{\text{GR}}
\newcommand{\gr}{$\mgr$\xspace}

\newcommand{\grdp}{$\mgr\mdp$\xspace}

\newcommand{\grsq}{$\mgr^2$\xspace}

\algnewcommand{\IIf}[1]{\State\algorithmicif\ #1\ \algorithmicthen}
\algnewcommand{\EndIIf}{\unskip\ \algorithmicend\ \algorithmicif}

\begin{document}
\title{Network Clocks: Detecting the Temporal Scale of Information Diffusion}

\author{
\IEEEauthorblockN{\begin{tabular*}{0.81\textwidth}{@{\extracolsep{\fill} }c c c}
Daniel J. DiTursi\IEEEauthorrefmark{1}\IEEEauthorrefmark{2} & Gregorios A. Katsios\IEEEauthorrefmark{1} & 
Petko Bogdanov\IEEEauthorrefmark{1}\end{tabular*}}

\IEEEauthorblockA{\begin{tabular*}{0.72\textwidth}{@{\extracolsep{\fill} }c c}
\parbox[t]{0.35\textwidth}{\centering\IEEEauthorrefmark{1}Department of Computer Science\\
State University of New York at Albany\\
Albany, NY  12222 } &
\parbox[t]{0.35\textwidth}{\centering\IEEEauthorrefmark{2}Department of Computer Science\\
Siena College\\
Loudonville, NY  12211} 
\end{tabular*}}
}
\maketitle

\begin{abstract}
Information diffusion models typically assume a discrete timeline in which an information token spreads in the network. Since users in real-world networks vary significantly in their intensity  and periods of activity, our objective in this work is to answer: \emph{How to determine a temporal scale that best agrees with the observed information propagation within a network?}

A key limitation of existing approaches is that they aggregate the timeline into fixed-size windows, which may not fit all network nodes' activity periods. We propose the notion of a {\em heterogeneous\em} network clock: a mapping of events to discrete timestamps that best explains their occurrence according to a given cascade propagation model. We focus on the widely-adopted independent cascade (IC) model and formalize the optimal clock as the one that maximizes the likelihood of all observed cascades. The single optimal clock (\oc) problem can be solved exactly in polynomial time. However, we prove that learning multiple optimal clocks (\koc), corresponding to temporal patterns of groups of network nodes, is NP-hard. We propose scalable solutions that run in almost linear time in the total number of cascade activations and discuss approximation guarantees for each variant. Our algorithms and their detected clocks enable improved cascade size classification (up to $8\%$ F1 lift) and improved missing cascade data inference ($0.15$ better recall). We also demonstrate that the network clocks exhibit consistency within the type of content diffusing in the network and are robust with respect to the propagation probability parameters of the IC model.
\end{abstract}

\section{Introduction}

Models for information propagation in social networks assume a discrete timeline over which the information spreads~\protect\cite{goldenberg2001using,granovetter1978threshold}. At the same time, real-world network data is collected at different temporal resolutions e.g. minutes, hours, weeks. Hence, a common step before analyzing real-world data is to define a temporal scale to aggregate events in the network into discrete snapshots. Such aggregations make the implicit assumption that there exists a single temporal resolution that works well for all network nodes and periods of activity. Empirical studies of social network activity, however, demonstrate a wide variance in activity levels across users and across time~\protect\cite{macropol2013act,tagarelli2015time,yang2014finding}, thus challenging the above assumption. \emph{How can we account for heterogeneity in user activity over time when analyzing information propagation data from real-world networks?} 

To address the above problem, we propose the problem of detecting data-driven {\em heterogeneous\em} temporal aggregations for information diffusion under the widely-adopted independent cascade (IC) model~\protect\cite{goldenberg2001using,goldenberg2001talk}. We formalize the problem as finding the temporal aggregation that maximizes the likelihood of observed cascades according to the IC model. We allow this aggregation to vary across time and across users, driven by the likelihood of observed data. We refer to such an optimal temporal aggregation as a \emph{network clock}: a discrete timeline with time steps determined by user activity as opposed to regular intervals of ``wall-clock'' time. 

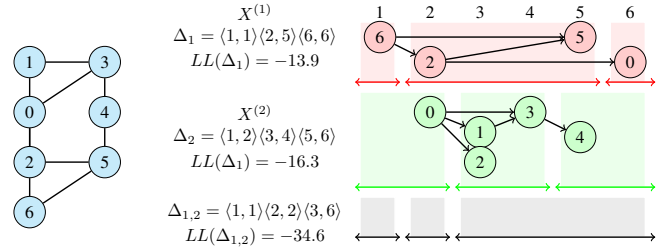
\begin{figure}[t]
\centering
 \resizebox{0.48\textwidth}{!}{%
 \begin{tikzpicture}[auto, thick]

\foreach \place/\x in {{(-2,1)/0}, {(-2,2)/1},{(-0.5,2)/3},
    {(-0.5,1)/4}, {(-2,0)/2}, {(-0.5,0)/5}, {(-2,-1)/6}}
  \node[cblue] (a\x) at \place {\x};
  \path[thick] (a0) edge (a1);
  \path[thick] (a0) edge (a2);
  \path[thick] (a0) edge (a3);
  \path[thick] (a1) edge (a3);
  \path[thick] (a0) edge (a2);
  \path[thick] (a2) edge (a5);
  \path[thick] (a2) edge (a6);
  \path[thick] (a3) edge (a4);
  \path[thick] (a4) edge (a5);
  \path[thick] (a5) edge (a6);

\node at (5,3){1};
\node at (6,3){2};
\node at (7,3){3};
\node at (8,3){4};
\node at (9,3){5};
\node at (10,3){6};

\node[] at (2.5,3){$X^{(1)}$};
\node[] at (2.5,2.5){$\Delta_{1}=\intv{1}{1}\intv{2}{5}\intv{6}{6}$};
\node[] at (2.5,2){$LL(\Delta_{1})=-13.9$};

\filldraw[fill=red!8!white, draw=white] (4.6,1.6) rectangle (5.3,2.8);
\draw[red,thick,<->] (4.5,1.6) --  (5.4,1.6);
\filldraw[fill=red!8!white, draw=white] (5.6,1.6) rectangle (9.3,2.8);
\draw[red,thick,<->] (5.5,1.6) --  (9.4,1.6);
\filldraw[fill=red!8!white, draw=white] (9.6,1.6) rectangle (10.3,2.8);
\draw[red,thick,<->] (9.5,1.6) --  (10.5,1.6);

\node[cred] (c16) at (5,2.5){6};
\node[cred] (c12) at (6,2){2};
\node[cred] (c15) at (9,2.5){5};
\node[cred] (c10) at (10,2){0};

\draw [->] (c16) edge (c12) (c16) edge (c15)  (c12) edge (c15) (c12) edge (c10)  ;

\node[] at (2.5,1){$X^{(2)}$};
\node[] at (2.5,0.5){$\Delta_{2}=\intv{1}{2}\intv{3}{4}\intv{5}{6}$};
\node[] at (2.5,0){$LL(\Delta_{1})=-16.3$};

\filldraw[fill=green!8!white, draw=white] (4.6,-0.5) rectangle (6.3,1.4);
\draw[green,thick,<->] (4.5,-0.5) --  (6.4,-0.5);
\filldraw[fill=green!8!white, draw=white] (6.6,-0.5) rectangle (8.3,1.4);
\draw[green,thick,<->] (6.5,-0.5) --  (8.4,-0.5);
\filldraw[fill=green!8!white, draw=white] (8.6,-0.5) rectangle (10.3,1.4);
\draw[green,thick,<->] (8.5,-0.5) --  (10.5,-0.5);

\node[cgreen] (c20) at (6,1){0};
\node[cgreen] (c21) at (7,0.6){1};
\node[cgreen] (c22) at (7,0){2};
\node[cgreen] (c23) at (8,1){3};
\node[cgreen] (c24) at (9,0.5){4};

\draw [->] (c20) edge (c21) (c20) edge (c22)  (c20) edge (c23) (c21) edge (c23) (c23) edge (c24)  ;

\node[] at (2.5,-1){$\Delta_{1,2}=\intv{1}{1}\intv{2}{2}\intv{3}{6}$};
\node[] at (2.5,-1.5){$LL(\Delta_{1,2})=-34.6$};

\filldraw[fill=black!8!white, draw=white] (4.6,-1.5) rectangle (5.3,-0.7);
\draw[black,thick,<->] (4.5,-1.5) --  (5.4,-1.5); 
\filldraw[fill=black!8!white, draw=white] (5.6,-1.5) rectangle (6.3,-0.7);
\draw[black,thick,<->] (5.5,-1.5) --  (6.4,-1.5);
\filldraw[fill=black!8!white, draw=white] (6.6,-1.5) rectangle (10.3,-0.7);
\draw[black,thick,<->] (6.5,-1.5) --  (10.5,-1.5);


\end{tikzpicture}
}
\caption{ \footnotesize An example network and two cascades $X^{(1)}$ and $X^{(2)}$ with their original measured activation times between $1$ and $6$. Assuming an independent cascade with spontaneous activation probability $p_e=0.001$ and a neighbor propagation probability $p_n=0.1$, the best clock for $X^{(1)}$ is $\Delta_1$ with a log-likelihood $LL=-13.9$ and the best clock for $X^{(2)}$ is $\Delta_2$ with $LL=-16.3$. An optimal clock that maximizes the likelihood of both cascades is $\Delta_{1,2}$, which is suboptimal for the individual cascades.}
\label{fig:ex}
\vsc\vsb
\end{figure}

The example in Fig.~\ref{fig:ex} involves a small graph of $7$ nodes (left) and two cascades $X^{1}$ and $X^{2}$ (right), which are observed over the duration of $6$ discrete time steps. A cascade is a sequence of activation events with their originally measured times define the \emph{original time resolution} of observed cascades. On the right, cascades are presented as directed graphs where edges denote the possibility of influence from an activated node to a network neighbor which is activated at a later time (the horizontal position of nodes corresponds to their original time of activation). Using the original time resolution for cascade $X^{1}$ will render nodes $5$ and $0$ being as spontaneous activations, since IC requires that potential influence nodes are activated in the immediately preceding time step. Such view of time will result in a low IC likelihood as activations through the network have typically higher-likelihood. Thus, using the original time resolution as a network clock will poorly capture how nodes influenced each other. 

Alternatively, the clock $\Delta_1$ aggregates the middle 4 time ticks into a single time step, i.e. the activations of nodes $2$ and $5$ are treated as simultaneous, with the activation of $6$ preceding them and that of $0$ immediately following. As all activations except the first are now explained by the immediately prior influence of a neighbor, this clock maximizes the likelihood of the observed activations being produced by an IC process.

In our running example (Fig.~\ref{fig:ex}), the clocks $\Delta_1$ and $\Delta_2$ that optimize the two cascades  separately are both different from each other and from the clock that maximizes the likelihood of both cascades considered together ($\Delta_{1,2}$). Imposing the same overall clock on all users and all cascade instances may be overly restrictive. Different users might engage in the information propagation at different times depending on their interest, overall activity, time zone, etc. In addition, different information, e.g. videos vs. text, may spread at different rates and at different times. Hence, we also consider the problem of finding multiple clocks for a single dataset.

Knowledge of the network clocks can improve a variety of applications. Based on past propagation, one can predict likely times of future activity and allocate computational resources to handle the user traffic accordingly. In addition, clocks may enable classification of cascades future sizes~\protect\cite{cheng2014can} and the type of non-text information (e.g. video and photos) without deep content analysis, but instead by relying on similarities to other annotated content. The latter may unlock applications such as automated hash-tagging for improved social media stream access. Given knowledge of community-specific clocks, one can also impute missing activations~\protect\cite{zong2012inferring,sadikov2011correcting}, which in turn can be used for better influence maximization in the face of a deadline 
(e.g. elections or ticket sales), and the improved quality of simulations of diffusion processes.

Our contributions in this work are as follows:\\
\noindent 1. We introduce the novel concept of heterogeneous network clocks: optimal temporal aggregations for observed information diffusion in social media. We show that while the single network clock (\oc) problem can be solved polynomially, the multiple network clocks (\koc) problem is NP-hard.\\
\noindent 2. We provide an exact solution for \oc that, while polynomial, is impractical for non-trivial instances. We also propose fast solutions for both \oc and \koc, capable of processing instances with millions of nodes and cascade activations.\\
\noindent 3. We demonstrate the utility of the network clocks for multiple applications, boosting the performance of existing algorithms for cascade size prediction (increase in $F_1$-measure by 8\%) and cascade completion (improving recall by 0.2 or more) in both synthetic and real-world data.  

\vsa
\section{Problem formulation and complexity}
\vsa
\subsection{Preliminaries}
\vsa
\eat{
\begin{table*}[t]
    \centering
    \begin{tabular}{l|l}
        {\bf Symbol} & {\bf Definition}  \\ \hline
        $G(V,E)$                      & A directed unweighted graph with vertices $V$ and edges $E$ \\ \hline
        $N_u$                        & In-neighbors of vertex $u$ in $G$ \\ \hline
        $\mathbb{T} = \{0 \ldots T\}$ & Original timeline of discrete time points \\ \hline
        $X^{(k)}=\{(u,t)\}$                 & A cascade (contagion): activation node-time pairs $(u,t),u\in V, t\in\mathbb{T}$ \\ \hline
        $A_t^{(k)}$                   & Activated nodes at time $t$, i.e. $\{u|(u,t)\in X_k\}$ \\ \hline
        $N_{u,t}^{(k)}=A_{t-1}^{(k)} \cap N_u$ & Contagious neighbors of $u$ at time $t$ \\ \hline
        $p_c,p_e$                     & Neighbor $p_c$ and external $p_e$ activation prob. in the IC model \\ \hline
        $L_{u,t}^{(k)}=1 - (1 - p_e)(1 - p_c)^{|N_{u,t}^{(k)}|}$ & Likelihood of observed activation of $u$ at $t$  \\ \hline
    \end{tabular}
    \caption{Notation}
    \label{tab:my_label}
\end{table*}
}

We denote a static directed network as $G(V,E)$ with nodes $V$ and edges $E \subseteq V \times V$. We refer to individual instances of diffusion processes over time through this network as \emph{cascades}. Thus, a cascade $X_i=\{(u,t)\}$ is a set of node-time pairs, each representing the activation of node $u$ at time $t$, where the set of all observed cascades in a data set is $\mathbb{X} = \{X_1, \ldots, X_m\}$. We assume a discrete timeline where $1$ and $T$ are the times of the first and last activation events in $\mathbb{X}$, respectively. We also assume that each cascade activates a vertex at most once. 

Among the multiple models for cascade processes, we focus on the well-studied independent cascade (IC) model initially introduced by Goldenberg et al.~\protect\cite{goldenberg2001using}. The key intuition behind the model is that, with respect to a given cascade, each node may be in only one of three states: inactive, active, or previously activated. The critical notion in this model is that nodes can only influence neighbors for a single time period after activation. In other words, a node that is activated at time $t$ increases the likelihood that its neighbors activate at time $t+1$, but not at $t+2$ or any later time.

The IC model has two key parameters: the probability $p_n$ that a newly activated node activates any network neighbor, and the probability $p_e$ that an inactive node \say{spontaneously} activates due to external factors. Note that the above assumes homogeneous activation probabilities for all nodes and neighbor links. Our methods are directly applicable to the non-homogeneous IC setting where probabilities may differ across nodes and edges, but we do not explore that setting here. We assume that $p_e$ is much smaller than $p_n$; otherwise the network structure becomes much less important. As suggested by the name of the model, all activation events are presumed to be independent of each other.

The set of incoming neighbors of $u$ is denoted by $N(u)=\{v\ |\ (v,u) \in E\}$; these are the nodes that can potentially influence $u$. The set of nodes activated as part of a cascade $X$ at time $t$ is denoted by $A(X,t)$. The activation time of node $v$ within cascade $X$ is denoted as $X(v)$. Finally, the set of newly active neighbors of $v$ at time $t$ are denoted as $C(X,v,t) = A(X,t-1)\ \cap\ N(v)$, i.e. those are the nodes that can potentially activate $v$ at time $t$.

Given the above definitions, we can quantify the {\em a priori\em} likelihood that node $v$ is part of cascade $X$ at time $t$ as follows: \vsa 
\begin{equation}
  L(v,X,t) = \begin{cases}
    1 \text{, if $X(v) < t$ i.e. $v$ is already in $X$} \\
    1 - (1 - p_e)(1 - p_n)^{|C(X,v,t)|} \text{, o/w.}
\end{cases}
\end{equation}
This likelihood reflects the possibility of a node being activated by at least one contagious neighbor or by sources external to the network. In order to characterize the likelihood of all independent observations, we use the common practice of working with log-likelihoods. Thus, the log-likelihood of node $v$ joining $X$ at time $t$ is $LL(X,v,t) = \log{L(X,v,t)}$. For convenience, we also define the log likelihood of $v$ not joining $X$ at time $t$ as $\overline{LL}(X,v,t) = \log{1 - L(X,v,t)}$. We can now quantify the log-likelihood of all observed cascades $\mathbb{X}$ and their activation events as: \vsa
\begin{equation}
  \label{eq:ll}
  LL(\mathbb{X}) = \sum_{X \in \mathbb{X}}\sum_{t=1}^T\big(\hspace{-8pt}
  \sum_{\hspace{2pt}u \in A(X,t)}\hspace{-8pt}LL(u,X,t) +\hspace{-8pt} \sum_{v \notin A(X,t)}\hspace{-8pt}\overline{LL}(v,X,t)
  \big) 
\end{equation}
Note that the independence assumption across cascades and activation events results in the above sum of log-likelihoods. 


\vsa
\subsection{Network clocks and likelihood improvement}
\vsa

The independent cascade model assumes discrete time of events and that all nodes are participants in the diffusion process at every time step. Such assumption enforces a uniform mapping of ``wall-clock'' time to network events across the entire network, and thus, renders IC models infeasible to interpret information diffusion in real-world networks, which are often not governed by a single clock.

For example, consider an online social network user who checks her feed upon waking up, once or twice during lunch break, and several times in the evening. No uniform timeline can explain her activity sessions well - short time steps will make morning activity appear unrelated to posts that came in overnight, while long time steps may lose information about rapid responses that occur in the evening. A {\em heterogeneous\em} division of the timeline, which we term a \emph{network clock} is needed to accurately explain such activity.

A network clock $\Delta$ is a partitioning of $\mathbb{T}$ into contiguous segments of the original timeline corresponding to ``new'' discrete time steps. In effect, the events that occur within each interval of the timeline are considered to be simultaneous. In the example from Fig.~\ref{fig:ex}, the clock $\Delta_1$ for cascade $X^{(1)}$ partitions the time into intervals $\intv{1}{1},\intv{2}{5}$ and $\intv{6}{6}$, which renders the activations of nodes $2$ and $5$ simultaneous. 

Formally, a network clock $\Delta = \{\delta_1,\ldots, \delta_f\}$ is a set of adjacent non-overlapping time intervals $\delta_i = \intv{t_s}{t_e}$ that span the original timeline $\mathbb{T}$, i.e. :\vsa
$$t_s(\delta_i) \leq t_e(\delta_i), t_s(\delta_i) = t_e(\delta_{i-1}) + 1, \forall i.\vsa$$


The two extreme clocks are the \emph{maximum aggregation} (i.e. all original timestamps are simultaneous): $\Delta_{max} = \{ \intv{1}{T}\}$, and the \emph{minimum aggregation}, corresponding to the original timeline: $\Delta_{min} = \{\intv{1}{1}, \ldots, \intv{T}{T}\}$. The total number of possible clocks is $2^{T-1}$, since we can either split or not between any two consecutive points in $\intv{1}{T}$.

For a fixed clock $\Delta$, we treat $\delta_i$ as individual consecutive time steps (indexed by $i$); however, as necessary we also use $\delta_\Delta(t)$ to denote the unique interval $\delta \in \Delta$ containing the original time $t$. Given a clock $\Delta$, the set of active nodes at period $\delta$ is all activations that happened during $\delta$, denoted $A(X,\delta) = \bigcup_{t = t_s(\delta)}^{t_e(\delta)}A(X,t)$; other temporal entities are similarly mapped from the original times $t$ to periods $\delta$. We can now redefine the clock likelihood of an observed activation during interval $\delta_i$ following interval $\delta_{i-1}$ as:\vsa
\begin{equation}
  \label{eq:intlike}
  L(v,X,\delta_i\leftarrow\delta_{i-1}) = \begin{cases}
    1 \text{, when $X(v) < t_s(\delta_i)$} \\
    1 - (1 - p_e)(1 - p_n)^{|C(X,v,\delta_i)|} \text{, o/w.}
\end{cases}
\end{equation}

\noindent Note that the set of contagious neighbors of $v$ at time $\delta_i$ are determined by the preceding interval: $C(X,v,\delta_i) = N(v) \cap A(X,\delta_{i-1})$. As a result, different clocks will result in different likelihoods of observed events. We denote the log-likelihood of all events given a clock as $LL(\mathbb{X}|\Delta)$, defined analogously to Eq.~\ref{eq:ll}, where the sum over $t$ and selection of active nodes is replaced by a sum over clock intervals.
\eat{
\begin{equation*}
  LL(\mathbb{X}|\Delta) = \sum_{X \in \mathbb{X}}\sum_{\delta \in \Delta}\big(\hspace{-8pt}
  \sum_{\hspace{2pt}u \in A(X,\delta)}\hspace{-8pt}LL(X,u,\delta) +\hspace{-8pt} \sum_{v \notin A(X,\delta)}\hspace{-8pt}\overline{LL}(X,v,\delta)
  \big) 
\end{equation*}
}

Our goal is to find a clock (if any) that provides a better explanation of observed cascades $\mathbb{X}$ than the timeline of $\Delta_{max}$, a default clock that groups all events into a single step. A clock improvement $I(\mathbb{X}|\Delta)$ is the increase in likelihood compared to that of the default clock:\vsa
$$I(\mathbb{X}|\Delta)=LL(\mathbb{X}|\Delta) - LL(\mathbb{X}|\Delta_{max}).\vsa$$


In our example in Fig.~\ref{fig:ex} the default clock $\Delta_{max}$ will render the activation of all nodes spontaneous (i.e. activated from external sources) due to the lack of preceding time steps, and thus, of contagious neighbors. 
The log-likelihood of cascade $X^{(1)}$ (in red) under the default clock $\Delta_{max}$ is $L(X^{(1)}|\Delta_{min})=-23.5$, while the same quantity for clock $\Delta_1$ is $L(X^{(1)}|\Delta_{1})=-13.9$, hence the improvement of $\Delta_{1}$ for cascade $X^{(1)}$ is $I(X^{(1)}|\Delta_1)=9.6$. Similarly, the improvement of the clock $\Delta_{1,2}$ when both cascades are considered simultaneously can be calculated as $I(\{X^{(1)},X^{(1)}\}|\Delta_{1,2})=-34.6-(-40)=5.4$.

\vsa
\subsection{Optimal network clock problems: \oc and \koc}
\vsa

We are now ready to define our clock optimization problems: the \emph{Optimal clock problem (\oc)} which seeks to identify a temporal clock that best explains all observed cascades; and the \emph{k-Optimal Clock problem (\koc)} which maps nodes in the network to one of $k\leq|V|$ clocks corresponding to different temporal behaviors of network users. We show that the former can be solved in polynomial time, while the latter is NP-hard.
\vsb
\begin{definition}\emph{\bf [Optimal clock problem (\oc)]}
\label{prob:o1c}
Given a network $G$, a set of cascade observations $\mathbb{X}$, and probabilities $p_e$ and $p_n$, find the clock $\Delta^*=\argmax_{\Delta}I(\mathbb{X}|\Delta)$.\vsa
\end{definition}

Problem \oc asks for the single partition of time that best captures the temporal properties of all observed cascades $\mathbb{X}$ given the IC model. The problem is most applicable in cases where either all of the network nodes have similar temporal habits of social network interactions (e.g. 9am-5pm office workers), or when cascades progress in a similar fashion. In such cases it is desirable to detect this shared temporal behavior across the whole network.

Unlike a homogeneous clock, the optimal solution of \oc will adaptively segment time in order to increase the likelihood of observed data. Thus, periods of bursty activity may be partitioned into more steps, while those of relatively low activity combined into a single step. The number of potential clocks is $2^{T-1}$; nevertheless, \oc admits a polynomial-time exact solution based on dynamic programming discussed in detail in the following section. 

The assumption of a single temporal clock shared among all network nodes might be too stringent considering the size, scope and heterogeneity of real-world OSN users. Beyond 9am-5pm office worker activity patterns, we might observe those of high-school or college students, stay-at-home parents or seniors and in addition all these diverse demographics may reside in different time zones, while still participating in the same global OSN. To account for such diverse population of users we also consider a multiple clock version of the problem, where each node ``follows'' a single clock. 

One can think of a clock as how the events are temporally perceived from the viewpoint of a node. Instead of imposing the same clock on all nodes, we can model different behavioral patterns as multiple clocks that individual nodes follow. Given a set of clocks $\mathbb{D}=\{\Delta_1, \Delta_2..., \Delta_k\}$, we define their improvement as:\vsa
\begin{equation}
\label{eq:mci}
I(\mathbb{X}|\mathbb{D})=\sum_{v\in V} \max_{\Delta \in \mathbb{D}} \Big(LL(\mathbb{X}(v)|\Delta) - LL(\mathbb{X}(v)|\Delta_{min})\Big),\vsb
\end{equation}

\noindent where $\mathbb{X}(v)$ denotes the activations of node $v$ in all cascades in $\mathbb{X}$. Note that  $I(\mathbb{X}|\mathbb{D})$ can be evaluated in polynomial time if the set of clocks $\mathbb{D}$ is predefined by calculating the likelihood of each node $v$'s activations according to each clock. 

In our example in Fig.~\ref{fig:ex} clock $\Delta_{1,2}$ is the best single clock when considering both cascades $\mathbb{X}=\{X^{(1)},X^{(2)}\}$; however, under this clock, the influence from $6\rightarrow 5$ in $X^{(1)}$
is discounted since $6$ is activated in the first interval of $\Delta_{1,2}$, while $5$ is activated in its third interval. Similarly, the potential influences from $1\rightarrow 3$ and $3 \rightarrow 4$ in $X^{(2)}$ are discarded as all of those activations are rendered simultaneous according to $\Delta_{1,2}$. Consider, however, a second alternative clock for these cascades $\Delta=\intv{1}{3}\intv{4}{4}\intv{5}{6}$. The improvement of both clocks will exceed that of $\Delta_{1,2}$ alone, i.e.  $I(\mathbb{X}|\Delta_{1,2},\Delta)>I(\mathbb{X}|\Delta_{1,2})$, since the likelihoods of the activations of nodes $3$ and $4$ will increase by being reassigned to $\Delta$, thus considering all of their potential network activations.\vsa

\begin{definition}\emph{\bf [k-Optimal clock problem (\koc)]}
Given a network $G$, a set of cascade observations $\mathbb{X}$, probabilities $p_e$ and $p_n$, and a positive integer $k>0$ find the set of clocks $\mathbb{D}^*=\argmax_{\mathbb{D}}I(\mathbb{X}|\mathbb{D}), $ s.t. $|\mathbb{D}|\leq k$.\vsa
\end{definition}

For values of $k=1$ or $k\geq |V|$, \koc can be solved in polynomial time. $k=1$ is equivalent to the \oc problem and can be solved using dynamic programming as discussed in the next section, and when the budget of clocks exceeds the number of graph nodes, $k\geq |V|$, we can find the single best clock that optimizes the improvement of a given node's activations and then assign this clock to the node. However, for general $k<|V|$ we show that the problem is NP-hard by a reduction from \emph{Graph coloring} to a decision version of \koc. \vsb

\begin{theorem}
kOC is NP-hard.\vsa
\end{theorem}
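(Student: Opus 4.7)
The plan is to give a polynomial-time reduction from Graph $k$-Coloring to the decision version of kOC, which asks: given $(G,\mathbb{X},p_e,p_n,k,B)$, does there exist a set of clocks $\mathbb{D}$ with $|\mathbb{D}|\le k$ such that $I(\mathbb{X}|\mathbb{D})\ge B$? Membership in NP is immediate, since given a witness $\mathbb{D}$ we can evaluate $I(\mathbb{X}|\mathbb{D})$ in polynomial time using Eq.~\ref{eq:mci}, because each term $LL(\mathbb{X}(v)|\Delta)$ is a sum of $O(T)$ expressions of the form in Eq.~\ref{eq:intlike}.

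For hardness, given an instance $(H,k)$ of $k$-coloring I would build a network $G$ containing one node $v_G$ for every $v\in V(H)$ together with auxiliary ``gadget'' nodes, plus cascades $\mathbb{X}$ and a threshold $B$. Since Eq.~\ref{eq:mci} assigns each node to its single best clock in $\mathbb{D}$, any feasible $\mathbb{D}$ naturally induces a partition of $V(G)$ into at most $k$ classes, and these classes will play the role of color classes. The target is to arrange that a proper $k$-coloring of $H$ exists if and only if some $\mathbb{D}$ with $|\mathbb{D}|\le k$ achieves improvement at least $B$.

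The construction uses two kinds of gadgets. First, \emph{anchor} gadgets force the $k$ optimal clocks to come from a prescribed family of mutually ``incompatible'' candidate clocks $\{\Delta_1^*,\ldots,\Delta_k^*\}$, so that each color corresponds to exactly one $\Delta_i^*$; incompatibility is enforced by placing many anchor nodes whose activations carry a large likelihood under exactly one $\Delta_i^*$ and a near-zero likelihood under every other $\Delta_j^*$. Second, for each edge $\{u,v\}\in E(H)$, an \emph{edge} gadget contributes a clean positive gap $\gamma$ to $I$ exactly when $u_G$ and $v_G$ are assigned to different members of $\{\Delta_1^*,\ldots,\Delta_k^*\}$, and nothing otherwise; this is realized by cascades in which the IC likelihood of $v_G$ is maximized only when $u_G$'s activation falls in the immediately preceding interval, which is impossible if both nodes share the same clock. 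Setting $B$ equal to the sum of anchor contributions plus $\gamma\cdot|E(H)|$ makes the two decision problems equivalent.

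The main obstacle is the quantitative calibration of the gadgets. I need to simultaneously guarantee that (i) among all $2^{T-1}$ possible clocks, the $k$ improvement-maximizing choices are exactly $\{\Delta_1^*,\ldots,\Delta_k^*\}$, so that clock selection really does reduce to color selection rather than to some exotic partition of the timeline, and (ii) each edge gadget contributes an additive $\pm\gamma$ with no cross-interference across gadgets. Because IC log-likelihoods (Eq.~\ref{eq:intlike}) depend sensitively on interval boundaries, I expect to need the anchor contributions scaled large enough (by replicating anchor nodes/cascades polynomially in $|V(H)|$) to dominate any perturbation from edge gadgets; this is tedious but routine once the incompatibility structure is in place. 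Once (i)--(ii) hold, both directions follow: a proper coloring yields an assignment into $\{\Delta_1^*,\ldots,\Delta_k^*\}$ achieving every edge gadget's $\gamma$-gap, and conversely any $\mathbb{D}$ with $I(\mathbb{X}|\mathbb{D})\ge B$ must split every edge gadget, so the induced map $v\mapsto \Delta_{i}^*$ is a proper $k$-coloring of $H$.
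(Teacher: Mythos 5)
There is a genuine gap, and it sits at the heart of your reduction: the edge gadget you describe cannot be realized under the \koc objective. By Eq.~\ref{eq:mci}, $I(\mathbb{X}|\mathbb{D})$ is a sum over nodes of each node's \emph{individually} best improvement, and the likelihood $LL(\mathbb{X}(v)|\Delta)$ of $v$'s activations under a clock $\Delta$ depends only on $\Delta$ and on the \emph{original} activation times of $v$'s in-neighbors (through $C(X,v,\delta_i)=N(v)\cap A(X,\delta_{i-1})$) --- it does not depend on which clocks those neighbors are themselves assigned to. Consequently no choice of cascades can make a gadget ``contribute a clean positive gap $\gamma$ exactly when $u_G$ and $v_G$ are assigned to different clocks'': the objective contains no term that couples the assignments of two distinct nodes. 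Moreover, your supporting claim that $u_G$ landing in the interval immediately preceding $v_G$'s ``is impossible if both nodes share the same clock'' is false --- a single shared clock with exactly one boundary between $X(u_G)$ and $X(v_G)$ places them in adjacent intervals and makes $u_G$ contagious for $v_G$. The coupling between adjacent vertices must therefore be produced indirectly, and that is the idea your proposal is missing.

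The paper's gadget supplies exactly that idea. For each coloring edge $(u,v)$ it creates \emph{four} two-activation cascades whose later activations are $u$ or $v$, with influence edges interleaved in time (spanning $[t_1,t_2]$ and $[t_3,t_4]$ for $u$, and $[t_1,t_4]$ twice for $v$), so that no single clock can cut all four influence edges exactly once; hence no one clock can render all of $u$'s and all of $v$'s activations network-induced simultaneously. Gadgets for different edges occupy disjoint time windows, so one clock can serve all vertices of one color class. With $B=4|E|\log p_n$ and $p_e,p_n$ chosen so that $\log p_e < 2|E|\log p_n$, meeting the threshold forces every target activation to be non-spontaneous, which forces adjacent vertices onto distinct clocks and yields the equivalence with $k$-coloring. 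Note also that your ``anchor'' gadgets are unnecessary in this scheme: the paper never pins the $k$ clocks to a prescribed candidate family, it only argues that whatever clocks are used, adjacent vertices cannot share one --- which sidesteps the calibration difficulties you flag as ``tedious but routine.'' If you want to salvage your outline, replace the pairwise-coupling edge gadget with per-vertex conflicting-timing gadgets of this kind; your framing of the induced partition of $V$ into at most $k$ classes and the threshold accounting then goes through.
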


\begin{proof}
We show a reduction from $k$-Graph Vertex Coloring (kVC) to a decision version of \koc. Given a graph, kVC asks for a mapping of each node to at most $k$ colors such that no two colors are adjacent. Let \koc-B be a decision version of \koc asking if there exist $k$ clocks s.t. $I(\mathbb{X}|\mathbb{D})>B, B>0$. To construct the reduction we will need to show how we obtain an instance \koc from an instance of kVC and also set the values of $B$ and $k$.

Given a kVC instance $G(V,E)$ we create an instance of \koc involving a graph $G'(V',E')$ with $|V| + 2|E|$ nodes, $4|E|$ edges and $4|E|$ cascades using the gadget in Fig.~\ref{fig:npgadget}. For every kVC edge $(u,v)$, we create four cascades involving two activations each. The later activations of each of the cascades involve nodes corresponding to the nodes in the kVC instance, while the earlier activations involve unique other nodes. For example, the gadget corresponding to edge $(0,1)$ in Fig.~\ref{fig:npgadget} involves the four cascades spanning times $[1-4]$. Two cascades ``end''  with the equivalent node to $0$ in kOC: $(a_{01},t_1) \rightarrow (0,t_2)$ and $(b_{01},t_3)\rightarrow (0,t_4)$; and another two with the equivalent of node $1$: $(c_{01},t_1) \rightarrow (1,t_4)$ and $(d_{01},t_1)\rightarrow (1,t_4)$. Note that it is not possible to devise a single clock that renders all $4$ latter activations ($(0,t_2)$,$(0,t_4)$,and the two $(1,t_4)$) of the cascades non-spontaneous, i.e. there is no set of splits in time that will ``cut'' all $4$ cascade activation edges exactly once. 

Applying this transformation for each kVC edge we obtain an instance of $4|E|$ cascades, where a pair of cascades coincides in time only if they are within the same gadget. We set the log-likelihood requirement as $B=4|E|\log p_n$, where we set $p_e$ and $p_n$ such that $\log p_e < 2|E|\log p_n$. In other words, we require $k$ clocks that ensure that all latter gadget activations are non-spontaneous.

If there exists a kVC solution, one can create a corresponding \koc solution by assigning one clock per color ensuring that all activations of the corresponding nodes of that color are non-simultaneous in each gadget. This is possible since nodes of the same color will never appear in the same gadget. Consider, as an example, the set of corresponding $4$ clocks for a $4$-coloring of the kVC instance in Fig.~\ref{fig:npgadget}. Alternatively, assume there exists a \koc solution in which every cascade activation edge is cut exactly once by the clock assigned to the later node. Then all activations of a given end-node will be rendered network-induced by at least one clock and we can assign colors in the kVC instance based on this clock assignment ensuring that colors will not be adjacent. \vsa
\end{proof}
\begin{figure}[t]\vsa
\centering
 \resizebox{0.48\textwidth}{!}{%
 \begin{tikzpicture}[auto, thick]

\node[rotate=90] at (1,-3.2){\bf kVC};
\draw[ultra thick,dotted,<->] (1.2,-2.55) --  (1.2,-4.2);

\node[rotate=90] at (1,0){\bf kOC};
\draw[ultra thick,dotted,<->] (1.2,-2.4) --  (1.2,2.2);

  \node[cred] (a0) at (8,-3){0};
  \node[cgreen] (a1) at (8,-4){1};
  \node[cblue] (a2) at (9,-3){2};
  \node[cred] (a3) at (10,-3){3};
  \path[thick] (a0) edge (a1);
  \path[thick] (a1) edge (a2);
  \path[thick] (a0) edge (a2);
  \path[thick] (a2) edge (a3);

     \coordinate (01e) at (8,-3.5) ;
     \coordinate (01g) at (3.5,-2.3) ;
     \draw[->,gray,dashed] (01e) to [bend left] (01g) ;

     \coordinate (12e) at (8.5,-3.5) ;
     \coordinate (12g) at (15.5,-2.3) ;
     \draw[->,gray,dashed] (12e) to [bend right] (12g) ;

     \coordinate (02e) at (8.5,-3) ;
     \coordinate (02g) at (8,-2.3) ;
     \draw[->,gray,dashed] (02e) to (02g) ;

     \coordinate (23e) at (9.4,-3) ;
     \coordinate (23g) at (10.5,-2.3) ;
     \draw[->,gray,dashed] (23e) to (23g) ;

\node at (1.5,1){t:};
\node at (2,1){1};
\node at (3,1){2};
\node at (4,1){3};
\node at (5,1){4};
\node at (6,1){5};
\node at (7,1){6};
\node at (8,1){7};
\node at (9,1){8};
\node at (10,1){9};
\node at (11,1){10};
\node at (12,1){11};
\node at (13,1){12};
\node at (14,1){13};
\node at (15,1){14};
\node at (16,1){15};
\node at (17,1){16};


\node (c11) at (2,0){$a_{01}$};
\node (c12) at (3,0){0};
\node (c13) at (4,0){$b_{01}$};
\node (c14) at (5,0){0};
\node (c15) at (2,-1){$c_{01}$};
\node (c16) at (5,-1){1};
\node (c17) at (2,-2){$d_{01}$};
\node (c18) at (5,-2){1};

\draw [red,->] (c11) edge (c12) (c13) edge (c14);
\draw [green,->] (c15) edge (c16) (c17) edge (c18);

\node (c21) at (6,0){$a_{02}$};
\node (c22) at (7,0){0};
\node (c23) at (8,0){$b_{02}$};
\node (c24) at (9,0){0};
\node (c25) at (6,-1){$c_{02}$};
\node (c26) at (9,-1){2};
\node (c27) at (6,-2){$d_{02}$};
\node (c28) at (9,-2){2};

\draw [red,->] (c21) edge (c22) (c23) edge (c24);
\draw [blue,->] (c25) edge (c26) (c27) edge (c28);

\node (c31) at (10,0){$a_{23}$};
\node (c32) at (11,0){2};
\node (c33) at (12,0){$b_{23}$};
\node (c34) at (13,0){2};
\node (c35) at (10,-1){$c_{23}$};
\node (c36) at (13,-1){3};
\node (c37) at (10,-2){$d_{23}$};
\node (c38) at (13,-2){3};

\draw [blue,->] (c31) edge (c32) (c33) edge (c34);
\draw [red,->] (c35) edge (c36) (c37) edge (c38);

\node (c41) at (14,0){$a_{12}$};
\node (c42) at (15,0){1};
\node (c43) at (16,0){$b_{12}$};
\node (c44) at (17,0){1};
\node (c45) at (14,-1){$c_{12}$};
\node (c46) at (17,-1){2};
\node (c47) at (14,-2){$d_{12}$};
\node (c48) at (17,-2){2};

\draw [green,->] (c41) edge (c42) (c43) edge (c44);
\draw [blue,->](c45) edge (c46) (c47) edge (c48);

\node[red] at (2.5,2){$\Delta_{03}$};
\coordinate (C11) at (2.5,1.5) ;
\coordinate (C12) at (2.5,-2.2) ;
\draw[dotted,red,ultra thick] (C11) to (C12) ;

\node[red] at (4.5,2){$\Delta_{03}$};
\coordinate (C11) at (4.5,1.5) ;
\coordinate (C12) at (4.5,-2.2) ;
\draw[dotted,red,ultra thick] (C11) to (C12) ;

\node[red] at (6.5,2){$\Delta_{03}$};
\coordinate (C11) at (6.5,1.5) ;
\coordinate (C12) at (6.5,-2.2) ;
\draw[dotted,red,ultra thick] (C11) to (C12) ;

\node[red] at (8.5,2){$\Delta_{03}$};
\coordinate (C11) at (8.5,1.5) ;
\coordinate (C12) at (8.5,-2.2) ;
\draw[dotted,red,ultra thick] (C11) to (C12) ;

\node[red] at (11.5,2){$\Delta_{03}$};
\coordinate (C11) at (11.5,1.5) ;
\coordinate (C12) at (11.5,-2.2) ;
\draw[dotted,red,ultra thick] (C11) to (C12) ;

\node[green] at (3.5,3){$\Delta_{1}$};
\coordinate (C11) at (3.5,2.5) ;
\coordinate (C12) at (3.5,-2.2) ;
\draw[dotted,green,ultra thick] (C11) to (C12) ;

\node[green] at (14.5,3){$\Delta_{1}$};
\coordinate (C11) at (14.5,2.5) ;
\coordinate (C12) at (14.5,-2.2) ;
\draw[dotted,green,ultra thick] (C11) to (C12) ;

\node[green] at (16.5,3){$\Delta_{1}$};
\coordinate (C11) at (16.5,2.5) ;
\coordinate (C12) at (16.5,-2.2) ;
\draw[dotted,green,ultra thick] (C11) to (C12) ;

\node[blue] at (7.5,2.5){$\Delta_{2}$};
\coordinate (C11) at (7.5,2) ;
\coordinate (C12) at (7.5,-2.2) ;
\draw[dotted,blue,ultra thick] (C11) to (C12) ;

\node[blue] at (10.5,2.5){$\Delta_{2}$};
\coordinate (C11) at (10.5,2) ;
\coordinate (C12) at (10.5,-2.2) ;
\draw[dotted,blue,ultra thick] (C11) to (C12) ;

\node[blue] at (12.5,2.5){$\Delta_{2}$};
\coordinate (C11) at (12.5,2) ;
\coordinate (C12) at (12.5,-2.2) ;
\draw[dotted,blue,ultra thick] (C11) to (C12) ;

\node[blue] at (15.5,2.5){$\Delta_{2}$};
\coordinate (C11) at (15.5,2) ;
\coordinate (C12) at (15.5,-2.2) ;
\draw[dotted,blue,ultra thick] (C11) to (C12) ;

\end{tikzpicture}\vsb
}
\caption{ \footnotesize Gadget used in the reduction from Vertex Coloring (kVC) to \koc.}\vsa
\label{fig:npgadget}
\end{figure}
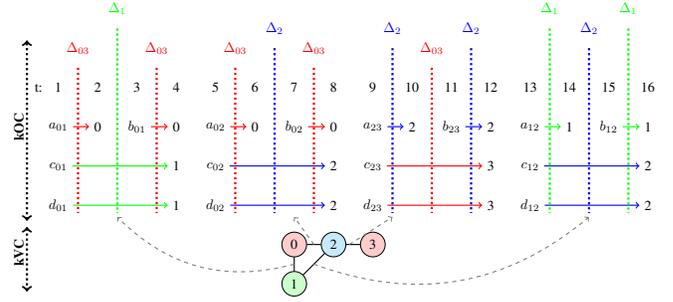

\vsa
\section{Algorithms for optimal clock detection}
\vsa

Next we present our solutions for \oc and \koc. We first demonstrate that an exact solution for \oc is possible in polynomial time employing dynamic programming (\dmp). Its complexity, however, makes it challenging to apply to real-world instances; hence we propose a fast and accurate greedy (\gr) alternative. Although \koc is NP-hard, we show that the improvement function is \emph{monotonic} and \emph{submodular} with respect to the set of solution clocks, leading to greedy generalizations \grdp and \grsq of our single clock solutions, the former featuring an approximation guarantee. Tbl.~\ref{tbl:algs} summarizes the algorithms and results discussed in this section.
\vsc

\begin{table}
\centering
\footnotesize
\begin{tabular}{|c|c|c|c|c|} \hline
    {\bf Algorithm}& {\bf Problem}& {\bf Complexity}& {\bf Asymptotic Time}& {\bf Guarantee}\\ \hline 
    \dmp & $\oc$&Poly &  $O(|\mathbb{X}|^4)$& {\em Exact\em}\\ \hline
    \gr & $\oc$& Poly&$O(m|\mathbb{X}|\log(|\mathbb{X}|))$& \\ \hline
    \grdp & \koc& NP-hard & $O(k|\mathbb{X}|^4)$ & $1-1/e$\\ \hline
    \grsq & \koc& NP-hard & $O(km|\mathbb{X}|\log(|\mathbb{X}|))$& \\ \hline
\end{tabular}
\caption{Algorithms, complexity and approximation guarantees.}
\label{tbl:algs}
\end{table}

\subsection{Single-clock approaches}
\vsa

While there are $O(2^{T-1})$ possible clocks for a timeline of $T$ original steps, we show that we can compute the optimal one by considering $O(T^2)$ intervals and employing dynamic programming. Let $S:\{\delta^e_s=\intv{s}{e}\}$ be the set of all possible sub-intervals of the timeline $[1,T]$, where $1\leq s\leq e \leq T$ denote the start and end of the interval. Note that there is a quadratic number of possible intervals $|S|=T(T+1)/2$. Let also $\delta^e_{\leftarrow}$ denote the set of all intervals ending at $e$ and starting at a time no later than $e$. Let the improvement of $v$ in cascade $X$ over interval $\delta^e_s$ given a preceding interval $\delta^{s-1}_b$ be:
\begin{multline}
     I(v,X,\delta^e_s\leftarrow \delta^{s-1}_b)= \\
\threepartdef
{LL(v,X,\delta^e_s\leftarrow \delta^{s-1}_b)}      {X(v)>e}
{LL(v,X,\delta^e_s\leftarrow \delta^{s-1}_b)-\log p_e}      {X(v)\in\delta_s^e}
{0} {X(v)<s,}
\label{eq:intimpr}
\end{multline}
where $LL(v,X,\delta^e_s\leftarrow \delta^{s-1}_b)$ is the interval log-likelihood of $v$ defined as in Eq.~\ref{eq:intlike} with the set of active neighbors of $v$ $C(X,v,\delta^e_s\leftarrow \delta^{s-1}_b)=N(v)\cap A(X,\delta^{s-1}_b)$. Note that in the above definition we subtract the log-likelihood of the default clock $\Delta_{min}$ in case the activation of $v$ is within the interval $X(v)\in\delta_s^e$ to quantify the improvement due to $v$'s activation. The total improvement of the interval transition $\delta^e_s$ is defined as the sum of all node improvements in all cascades: \vsa
\begin{equation}
    I(\delta^e_s\leftarrow \delta^{s-1}_b,\mathbb{X}) = \sum_{X \in \mathbb{X}}\sum_{v \in V} I(v,X,\delta^e_s\leftarrow \delta^{s-1}_b).\vsa
    \label{eq:totalintimpr}
\end{equation} 

We will denote with $I^*(\delta^e_s,\mathbb{X})$ the best improvement of log-likelihood up-to and including interval $\delta^e_s$, over all sequences of preceding intervals. In other words, $I^*(\delta^e_s,\mathbb{X})$ is the optimal clock up to time $e$, such that the last interval in the clock is $\delta^e_s$. We can recursively define this quantity as: \vsb
\begin{equation}
    I^*(\delta^e_s,\mathbb{X})=\max_{\delta \in \delta^{s-1}_{\leftarrow}} \Big[  I^*(\delta,\mathbb{X}) + I(\delta^e_s\leftarrow \delta,\mathbb{X})\Big].\vsa
    \label{eq:optintimpr}
\end{equation}

The maximum in the definition is taken over the set of all intervals ending immediately before $\delta^e_s$: $\delta^{s-1}_{\leftarrow}$. Based on the above definition, the improvement of the best clock is the maximum over the best improvements of intervals ending at $T$, i.e. $I(\mathbb{X}|\Delta^*)=\argmax_{\delta \in \delta^T_{\leftarrow}}I^*(\delta,\mathbb{X})$. This equivalence gives us the blueprint for a dynamic programming (DP) solution that progressively computes the improvement of all $T(T+1)/2$ intervals and uses back-tracking to retrieve the optimal clock.  

\begin{algorithm}[t]
\footnotesize
\caption{\dmp}\label{alg:singleDP}
\begin{algorithmic}[1]
\Require{Graph $G$, set of cascades $\mathbb{X}$}
\Ensure{Clock $\Delta^*$ solution to OC}
\For{$e=1\dots T$}
  \State $I^*(\delta^e_1,\mathbb{X}) = I(\delta^e_1\leftarrow \emptyset,\mathbb{X})=0$  \emph{// Init. intervals starting at t=1}
\EndFor
\For{$s=2\dots T$}
  \For {$e=s\dots T$}
    \State $I^*(\delta^e_s,\mathbb{X})=\max_{\delta \in \delta^{s-1}_{\leftarrow}} \big[  I^*(\delta,\mathbb{X}) + I(\delta^e_s\leftarrow \delta,\mathbb{X})\big]$
    \State $bt(\delta^e_s)=\delta^{s-1}_b$, where $\delta^{s-1}_b$ is the maximizer in $I^*(\delta^e_s,\mathbb{X})$
  \EndFor
\EndFor
\State $\delta^T_b=\argmax_{\delta \in \delta^T_{\leftarrow}}I^*(\delta,\mathbb{X})$ // Last interval of $\Delta^*$
\State Reconstruct $\Delta^*$ by recursively following $bt()$ starting from $bt(\delta^T_b)$
\State \Return{$\Delta^*$}
\end{algorithmic}
\end{algorithm}

\begin{algorithm}
\footnotesize
\caption{\gr}\label{alg:singleGreedy}
\begin{algorithmic}[1]
\Require{Graph $G$, set of cascades $\mathbb{X}$}
\Ensure{Clock $\Delta$}
\State $\Delta = \{\intv{1}{T}\}$
\State $\mathbb{E}_a=\{(u,v,X)|(u,v)\in E,X(u)<X(v)\}$ \emph{// Active edges}
\Repeat
  \State $\mathbb{C}=\emptyset$ \emph{// Set of independent cut positions}
  \For{$t$ in increasing activation times of all $(u,t',X)\in\mathbb{X}$}
    \State Add active edges $(u,v,X)$ to $\mathbb{E}_a(t)$ s.t. $X(v)=t$
    \State Compute $I(\mathbb{E}_a(t))$
    \IIf{$I(\mathbb{E}_a(t))>0$} add-or-drop$((t,\mathbb{E}_a(t))\rightarrow \mathbb{C})$ \EndIIf
    \State Remove active edges $(u,v,X)$ from $\mathbb{E}_a(t)$ s.t. $X(u)=t$ 
  \EndFor
  \IIf{$\mathbb{C}\neq \emptyset$} Add all cuts $\mathbb{C}$ to $\Delta$ \EndIIf
\Until $\mathbb{C}==\emptyset$
\State \Return $\Delta$
\end{algorithmic}
\end{algorithm}

Alg.~\ref{alg:singleDP} presents the steps of the \dmp algorithm based on Eq.~\ref{eq:optintimpr} producing the optimal solution of \oc.
Steps 1-3 compute the best improvement for the base cases of the recursion, i.e. all intervals staring at $t=1$ which do not have preceding intervals. Since no nodes are active before this time, all activations are considered spontaneous, and thus the individual node improvement scores and the cumulative $I(\delta^e_1\leftarrow \emptyset,\mathbb{X})$ are all zero. The nested loops in Steps 4-9 incrementally compute the best improvement (Step 6) of intervals of increasing starting position and record the preceding interval that leads to the best improvement to be later used by backtracking (Step 7). Note that at the time at which interval $\delta^e_s$ is considered, the best improvement of all immediately preceding intervals $I^*(\delta^{s-1}_\leftarrow,\mathbb{X})$ is already available, thus allowing the recursive computation of $I^*(\delta^e_s,\mathbb{X})$ in Step 6. Finally, the interval of best improvement is selected among those that end at $T$ and the optimal clock is reconstructed by backtracking from this interval using $bt()$ (Steps 10-11). 

\noindent{\it \dmp Complexity: } Alg.~\ref{alg:singleDP} runs in time $O(|\mathbb{X}|T^3)$, where $|\mathbb{X}|$ denotes the number of all activations in all cascades. This is due to the nested loops (Steps 4-9) which consider all $O(T^2)$ intervals and the maximization in line $6$ which iterate over $O(T)$ preceding intervals of the current $\delta^e_s$. The complexity does not depend on the full graph $G(V,E)$, since only nodes participating in the cascades and downstream neighbors need to be considered (the likelihood of non-active nodes without active neighbors can be computed in constant time for any interval). In addition, when there are time steps without activations one can ``drop'' such intervals in a pre-processing step, effectively letting activations define the original time points. In this case the complexity becomes effectively $O(|\mathbb{X}|^4)$. 

While \dmp solves \oc exactly in polynomial time, its complexity is prohibitive for large real-world instances. \gr is a top-down greedy approach that scales almost linearly, adding one partition at a time by splitting existing intervals, starting from the default clock $\Delta_{max}$. The greedy aspect is in the selection of the next split based on maximum improvement of the score. 

The steps of \gr are outlined in Alg.~\ref{alg:singleGreedy}. It initializes the solution with the default aggregate clock $\Delta_{max}$ (Step 1) and computes all active edges $\mathbb{E}_a$ within cascades such that the source and destination are active preserving the temporal order (Step 2). It then iteratively tries to add groups of independent cuts $\mathbb{C}$ that do not intersect common active edges in Steps 3-12. $\mathbb{C}$ is empty at the beginning of each iteration (Step 4). The inner loop (Steps 5-10) is essentially a modified \emph{sweep line} algorithm in 1D~\protect\cite{shamos1976geometric}, where the segments are defined by active edges in $\mathbb{E}_a$. Going from small to large times $t$ of all activations in $\mathbb{X}$, we first add active edges that ``end'' at $t$ to $\mathbb{E}_a(t)$ (Step 6) and recompute the improvement $I(\mathbb{E}_a(t))$ based on ``current'' active edges $\mathbb{E}_a(t)$ and based on whether they have been already cut by previous sweeps (Step 7). If the improvement at time $t$ exceeds $0$, we attempt to add the cut $(t,\mathbb{E}_a(t))$ to the $\mathbb{C}$ (Step 8). A cut is added if (i) it does not have shared activations with any other cut in $\mathbb{C}$, or (ii) in the case it does, it is added if its improvement exceeds that of the unique cut it shares active edges with. This logic in \emph{add-or-drop()}(Step 8) ensures that $\mathbb{C}$ maintains the set of independent time cuts of highest improvement at every round of the outer loop. In Step 9 we remove active edges ending at $t$, since they will not be considered at $t+1$. If $\mathbb{C}$ is not empty (Step 12) we add all cuts to $\Delta$ and update their active edges to a cut state, which affects the improvement computations on Step 7 in subsequent iterations. Finally, if no new cuts were identified we terminate the loop and return the clock (Steps 12-13).

\noindent{\it \gr complexity: } Each sweep requires $O(|\mathbb{X}|\log (|E_a|))$ time, since we move through the activations and maintain the currently active edges (that enter and exit only) once the structure $E_a(t)$, where the log factor comes from maintaining them in a heap structure that support add (Step 6) and delete (Step 9). Since $|E_a|$ grows slower than $O(|\mathbb{X}|^2)$ the sweeps take $O(|\mathbb{X}|\log (|\mathbb{X}|))$. Generating all $|E_a|$ in Step 2 might in general be higher than $O(|\mathbb{X}|\log (|\mathbb{X}|))$ for very dense networks, but assuming that the average degree grows as $O(\log |V|)$ this step will not exceed in complexity a single sweep. The number of sweeps can also reach $m=O(|\mathbb{X}|)$ if we make one cut at a time on every sweep, but such instances will have lots of overlapping active edges in time. The total time is log-linear $O(m|\mathbb{X}|\log(|\mathbb{X}|))$ if we assume that not too many sweeps $m$ are performed. As demonstrated in the evaluation section, \gr scales near-linearly to large real-world instances and produces very high-quality solutions ($>90\%$ of optimal). 
\vsb
\subsection{Multi-clock approaches}
\vsa
Unlike \oc, \koc is NP-hard. We propose a greedy \grdp solution and show that it ensures a $(1-1/e)$-approximation. 

\begin{theorem}\vsa
$I(\mathbb{X}|\mathbb{D})$ is monotonic and submodular as a function of the set of clocks $\mathbb{D}$.
\end{theorem}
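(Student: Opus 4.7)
The plan is to exploit the fact that, given a fixed set of clocks $\mathbb{D}$, the improvement $I(\mathbb{X}|\mathbb{D})$ decouples across nodes: each node is assigned independently to the clock that maximizes its own likelihood. Concretely, for each $v \in V$ define the per-node score $f_v(\Delta) = LL(\mathbb{X}(v)|\Delta) - LL(\mathbb{X}(v)|\Delta_{min})$, which depends only on $v$ and $\Delta$ but not on the other clocks in $\mathbb{D}$. Then by Eq.~\ref{eq:mci},
\begin{equation*}
I(\mathbb{X}|\mathbb{D}) \;=\; \sum_{v \in V} g_v(\mathbb{D}), \qquad g_v(\mathbb{D}) \;:=\; \max_{\Delta \in \mathbb{D}} f_v(\Delta).
\end{equation*}
Since a nonnegative sum of monotone submodular functions is monotone submodular, it suffices to establish both properties for each $g_v$ individually.

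Monotonicity of $g_v$ is immediate: taking the maximum of $f_v$ over a larger set can only increase (or preserve) the value. For submodularity, let $\mathbb{D}_1 \subseteq \mathbb{D}_2$ and let $\Delta' \notin \mathbb{D}_2$ be a new candidate clock. Observing that $g_v(\mathbb{D} \cup \{\Delta'\}) = \max\bigl(g_v(\mathbb{D}),\, f_v(\Delta')\bigr)$, the marginal gain of adding $\Delta'$ can be written as
\begin{equation*}
g_v(\mathbb{D} \cup \{\Delta'\}) - g_v(\mathbb{D}) \;=\; \max\bigl(0,\; f_v(\Delta') - g_v(\mathbb{D})\bigr).
\end{equation*}
By monotonicity, $g_v(\mathbb{D}_1) \le g_v(\mathbb{D}_2)$, so $f_v(\Delta') - g_v(\mathbb{D}_1) \ge f_v(\Delta') - g_v(\mathbb{D}_2)$. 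Since $\max(0,\cdot)$ is nondecreasing, the marginal gain on $\mathbb{D}_1$ is at least as large as on $\mathbb{D}_2$, which is exactly the diminishing-returns condition for submodularity.

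Summing the per-node inequalities over $v \in V$ transfers both monotonicity and submodularity to $I(\mathbb{X}|\mathbb{D})$, completing the proof. I do not expect any technical obstacle: the argument is the standard ``max-coverage'' template, and the only bookkeeping issue is adopting a convention for $\mathbb{D} = \emptyset$ (e.g., setting $I(\mathbb{X}|\emptyset) = 0$, or implicitly including $\Delta_{min}$ so that $g_v \ge 0$ holds automatically). The one conceptual step worth emphasizing, and what makes the decomposition work in the first place, is that Eq.~\ref{eq:mci} assigns each node to a single clock---this is what allows the global improvement to factor as a sum over $v$ of independent per-node maxima and brings the problem into the well-understood setting of submodular maximization subject to a cardinality constraint, for which the natural greedy approach \grdp will yield the $(1 - 1/e)$ guarantee advertised in Tbl.~\ref{tbl:algs}.
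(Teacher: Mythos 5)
Your proof is correct and follows essentially the same route as the paper's: both rest on the decomposition $I(\mathbb{X}|\mathbb{D})=\sum_{v}\max_{\Delta\in\mathbb{D}}f_v(\Delta)$, with monotonicity from maximizing over a larger set and submodularity from the diminishing marginal gain of a pointwise maximum. Your closed-form expression $\max\bigl(0,\,f_v(\Delta')-g_v(\mathbb{D})\bigr)$ for the marginal gain is just a compact restatement of the paper's three-case analysis (new clock not selected under either set, selected under the smaller only, or selected under both), so the two arguments are interchangeable.
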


\begin{proof}\vsa
The monoticity is due to the maximization in the definition of the improvement (Eq.~\ref{eq:mci}): $I(\mathbb{D} \cup \Delta) \geq I(\mathbb{D})$, since in the worst case no node will ``select'' $\Delta$ rendering the improvement unchanged due to its addition.

Given two nested sets of clocks $\mathbb{D} \subseteq \mathbb{C}$ and $\Delta \notin \mathbb{C}$, let us consider four surjective (onto) functions from nodes $V$ to clocks:
$f_{\mathbb{D}}$, $f_{\mathbb{D}\cup\Delta}$, $f_{\mathbb{C}}$ and $f_{\mathbb{C}\cup\Delta}$ mapping any node $v$ to the clock (among the set in the function's subscript) that maximizes the node improvement. The submodularity property\vsb 
$$I(\mathbb{D} \cup \Delta) - I(\mathbb{D}) \geq I(\mathbb{C} \cup \Delta) - I(\mathbb{C}), \vsb$$
holds due to the following three possibilities for any $v \in V$:\\
\noindent $~\bullet~$ $f_{\mathbb{D}\cup \Delta}(v) \neq \Delta$. This implies that $f_{\mathbb{C}\cup \Delta}(v) \neq \Delta$. As a result $f_{\mathbb{D}}(v) = f_{\mathbb{D}\cup\Delta}(v)$ and $f_{\mathbb{C}}(v) = f_{\mathbb{C}\cup\Delta}(v)$, and $v$ contributes zero to both sides of the inequality. \\
\noindent $~\bullet~$ $f_{\mathbb{D}\cup \Delta}(v) = \Delta$ and $f_{\mathbb{C}\cup \Delta}(v) \neq \Delta$. Then $v$ makes a positive contribution to the LHS and zero contribution to the RHS.\\
\noindent $~\bullet~$ $f_{\mathbb{D}\cup \Delta}(v) =f_{\mathbb{C}\cup \Delta}(v) =\Delta$. Then $v$ contributes positively to both sides. However, $I(v,\Delta) \geq I(v,\mathbb{C}) \geq I(v,\mathbb{D})$ because $f_{\mathbb{C}}$ has access to all of the clocks that $f_{\mathbb{D}}$ does; it will not assign $v$ to a lower-scoring clock. 
\vsb
\end{proof}

\begin{corollary} {\bf [Apx. of \grdp (due to~\protect\cite{submodularityKrause})]}
A greedy selection of $k$ clocks ensures a $(1-1/e)$-apx. for \koc.
\label{cor:apx}\vsa
\end{corollary}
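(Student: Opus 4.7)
The plan is to invoke the classical result of Nemhauser, Wolsey, and Fisher, which states that the greedy algorithm for maximizing a nonnegative, monotone, submodular set function subject to a cardinality constraint $|\mathbb{D}| \leq k$ yields a solution within a factor of $(1-1/e)$ of the optimum. Since Theorem 2 has already established that $I(\mathbb{X}|\mathbb{D})$ is monotone and submodular in $\mathbb{D}$, the bulk of the technical work is done; the remaining job is to verify that the preconditions of that theorem are met and that \grdp is the correct greedy instantiation.

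First I would check nonnegativity and the base value. By definition $I(\mathbb{X}|\mathbb{D})$ is a sum over nodes of a maximum of differences from the baseline $LL(\mathbb{X}(v)|\Delta_{min})$; since $\Delta_{min}$ itself is a feasible choice per node, each per-node term is at least zero, so $I(\mathbb{X}|\emptyset)$ can be taken as $0$ (adopting the convention that an empty collection of clocks triggers the baseline). Combined with Theorem 2, this places us squarely in the Nemhauser--Wolsey--Fisher setting.

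Next I would describe \grdp's selection rule: starting from $\mathbb{D}_0 = \emptyset$, at iteration $i$ it picks
\[
  \Delta_i \;=\; \argmax_{\Delta}\; \bigl[\, I(\mathbb{X}\mid \mathbb{D}_{i-1}\cup\{\Delta\}) - I(\mathbb{X}\mid \mathbb{D}_{i-1}) \,\bigr],
\]
and then sets $\mathbb{D}_i = \mathbb{D}_{i-1}\cup\{\Delta_i\}$, stopping when $|\mathbb{D}_i|=k$. The crucial observation is that this marginal‐gain maximization reduces to a single-clock \oc-style problem on a modified objective: for each node $v$, one only needs to consider whether the new clock $\Delta$ would beat its current assignment $\max_{\Delta'\in\mathbb{D}_{i-1}} LL(\mathbb{X}(v)|\Delta')$. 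This lets the per-iteration inner problem be solved exactly by \dmp on the residual likelihoods, which is precisely what \grdp is designed to do.

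Finally, I would apply the cited result: since $I$ is nonnegative, monotone, and submodular, and since \grdp performs exact greedy marginal maximization at each of the $k$ rounds, the standard analysis yields
\[
  I(\mathbb{X}\mid \mathbb{D}_k) \;\geq\; \bigl(1 - 1/e\bigr)\, I(\mathbb{X}\mid \mathbb{D}^*),
\]
where $\mathbb{D}^*$ is the optimal size-$k$ set of clocks. The main subtlety, and where I would be most careful, is justifying that the inner greedy step truly attains the maximum marginal gain rather than an approximate one; if only an approximate inner oracle were available, the approximation ratio would degrade to $(1 - 1/e^{\alpha})$ for oracle quality $\alpha$. Since \dmp solves the inner problem exactly, the clean $(1-1/e)$ bound goes through.
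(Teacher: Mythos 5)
Your proposal is correct and follows essentially the same route as the paper: the corollary is obtained by combining the monotonicity and submodularity established in Theorem 2 with the classical Nemhauser--Wolsey--Fisher guarantee for greedy maximization of a nonnegative monotone submodular function under a cardinality constraint, which is exactly what the paper's citation supplies. Your additional checks (nonnegativity via the $\Delta_{min}$ baseline, the empty-set convention, and the requirement that the inner marginal-gain oracle be exact) are sound and, if anything, make explicit some preconditions the paper leaves implicit.
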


To construct a Greedy solution for kOC we need to be able to find the best next clock $\Delta$ to add to an existing solution of selected clocks $\mathbb{D}$. 
Let $f_{\mathbb{D}}(v)$ be a surjective function mapping every node to the clock $\Delta \in \mathbb{D}$ that maximizes the improvement due to $v$ in all cascades $I(v,\mathbb{X}|\Delta)$. Generalize the improvement of a node given previous clocks $\mathbb{D}$ as: \vsb 
\begin{multline}
     I(v,X,\delta^e_s\leftarrow \delta^{s-1}_b|\mathbb{D})= \\
     \max \{0,I(v,X,\delta^e_s\leftarrow \delta^{s-1}_b) - I(v,X|f_{\mathbb{D}}(v))\},
     \label{eq:intimprmulti}\vsb
\end{multline}
\noindent where $I(v,X|f_{\mathbb{D}}(v))$ is the improvement due to $v$'s activation as part of cascade $X$ in the best clock that $v$ is mapped to in the solutions so far $f_{\mathbb{D}}(v)$. We can similarly generalize the definitions in Eqs.~\ref{eq:totalintimpr}, \ref{eq:optintimpr} to account for already selected clocks obtaining the total interval improvement $I(\delta^e_s\leftarrow \delta^{s-1}_b,\mathbb{X}|\mathbb{D})$ and the optimal interval improvement $I^*(\delta^e_s,\mathbb{X}|\mathbb{D})$ by using $I(v,X,\delta^e_s\leftarrow \delta^{s-1}_b|\mathbb{D})$ in the corresponding definitions. Based on the above, we generalize \dmp($f_{\mathbb{D}}(v)$) to return the clock of best improvement given a selected clocks set $\mathbb{D}$ by passing an additional parameter $f_{\mathbb{D}}(v)$ and replacing all improvement functions in \dmp (Steps 2,6,7,10) with the corresponding improvements given $\mathbb{D}$. The overall \grdp algorithm starts with an empty set of clocks and incrementally adds the next best clock returned by \dmp($f_{\mathbb{D}}(v)$).

\grdp selects the best next clock greedily and is, thus, guaranteed to return a solution of at least $(1-1/e)OPT$ due to Corr.~\ref{cor:apx}. It, however, has a complexity $O(k|\mathbb{X}|^4)$.
As a fast alternative, we also extend \gr (Alg.~\ref{alg:singleGreedy}) to obtain a double-greedy \grsq scheme. It also calls incrementally a version of \gr($f_{\mathbb{D}}(v)$) with the additional parameter that allows for improvement evaluation, conditional on the best node scores from previously selected clocks $\mathbb{D}$ in (Step 7) of Alg.~\ref{alg:singleDP}. 
The complexity of \grsq is $O(k|\mathbb{X}|^2)$ in the worst case.

\vsa
\section{Experimental evaluation}
\vsa
\begin{table}[t]
\centering
\footnotesize
\begin{tabular}{|l|c|c|c|c|c|c|} \hline
           & $|V|$ & $\bar{|E|}$ & $|\mathbb{X}|$ & $\sum|X|$ & min$|X|$ & Duration \\ \hline 
    Synthetic &$1000$&$2200$&$5000$&$185k$&$30$&N/A \\ \hline
    Twitter~\protect\cite{macropol2013act} &$3331$&$1.2m$&$7685$&$117k$&$5$&3/12-4/12\\ \hline
    Flickr-s~\protect\cite{Cha09} &$66k$&$4.2m$&$278$&$200k$&$500$&9/04-3/09\\ \hline
    Flickr-b~\protect\cite{Cha09} &$127k$&$8.4m$&$3176$&$1m$&$200$&9/04-3/09\\ \hline
\end{tabular}
\caption{Datasets used for evaluation.}
\label{tbl:data}
\end{table}

We evaluate the scalability and quality of our methods for optimal clock detection and demonstrate their utility for several applications in both synthetic and real world data. All experiments are performed on a single core conventional PC architecture and implemented in Java (single-threaded).

Evaluation datasets are summarized in Tbl.~\ref{tbl:data}. Our \emph{Twitter} dataset includes propagation of URLs over two months within a snowball subnetwork of the Twitter follower graph including $3.3k$ users~\protect\cite{macropol2013act}. We expand shortened URLs from tweet messages and focus on domains of popular news and commentary websites (e.g. foxnews, breitbart, thehill) as well as photo and video domains, e.g. youtube, instagram, etc. Unique URLs tweeted by at least $5$ nodes form individual cascades resulting in $7.6k$ cascades and $117k$ total activations.

We also employ anonymized data from Flickr including friendship links and favourite photo markings from 
~\protect\cite{Cha09}. We treat each photo as a cascade over the friendship graph initiated by the owner of the photo (first activation) while subsequent users marking the photo as a favorite comprise the rest of the activation events within the photo cascade. We create two datasets for evaluation: all cascades of sizes exceeding $500$ (\emph{Flikr-s}) and all cascades of sizes exceeding $200$ (\emph{Flikr-b}).

We also scale-free structured synthetic networks of at least $1000$ nodes 
and sample IC cascades ($p_n=0.1$, $p_e=0.001$). To simulate the temporal heterogeneity of real-world activations, we randomly \emph{stretch} each original time step to multiple time steps and distribute its activations uniformly in the resulting time steps. This ensures that the relative order of non-simultaneous activations is conserved.

\subsection{Clock discovery: scalability, quality and stability}

\begin{figure}
\centering
\subfigure[][Clocks Twitter (conservative news)]
{
\centering
  \includegraphics[width=0.4\textwidth, trim=70 280 60 280, clip]{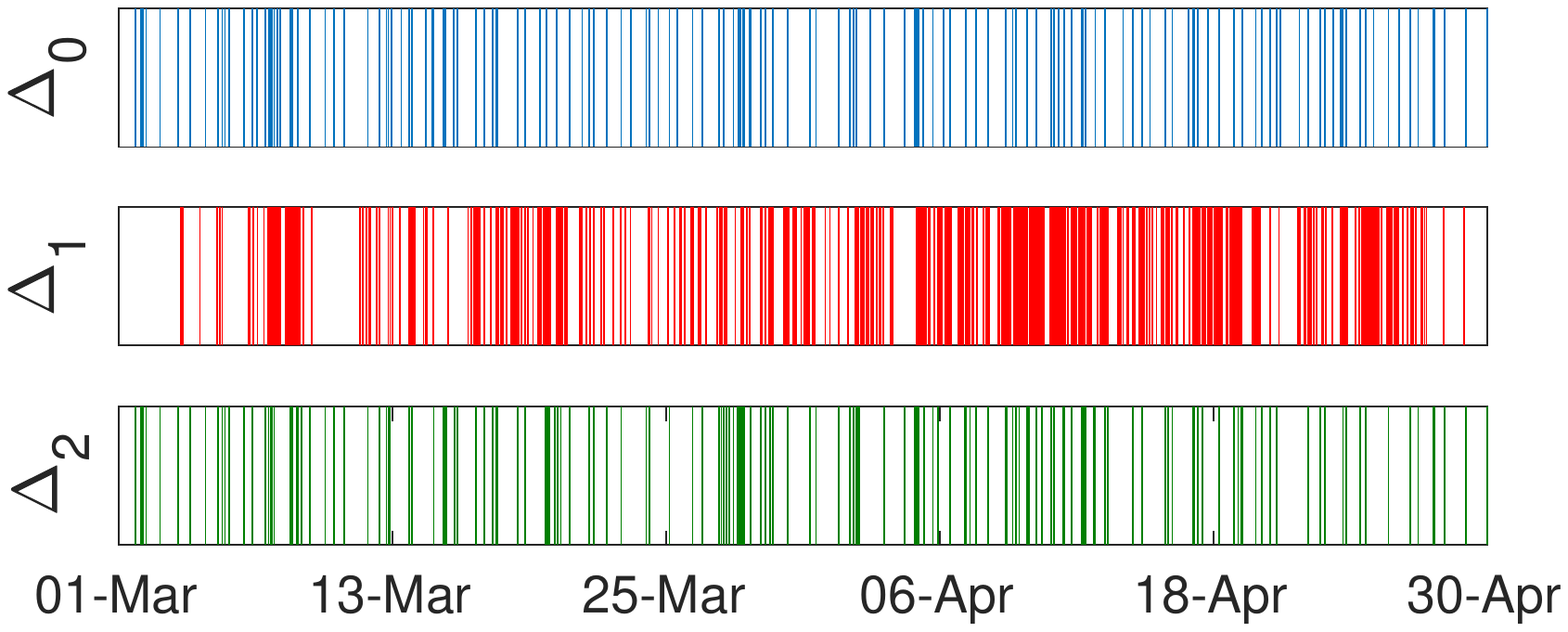}
  \label{fig:ex-clocks-twitter}
}
\vsb
\subfigure[][Clock statistics]
{
\footnotesize

\begin{tabular}[b]{c c c c} 
   &{$|\Delta|$}& {$|V|$}& {$\%I$}\\ \hline 
    $\Delta_0$:& 163 & 1038 & 83 \\ \hline 
$\Delta_1$: & 739 &226& 16 \\ \hline
$\Delta_2$: & 193 & 8 & 1 \\ \hline
\end{tabular}
\label{fig:ex-clocks-twitter-stats}
}

\caption{\footnotesize Example of a \koc solution in Twitter for $500$ conservative media cascades obtained by \grsq. \subref{fig:ex-clocks-twitter}: Interval separators over the period 03/12-04/12 for the first three clocks: $\Delta_0,\Delta_1,\Delta_2$. 
\subref{fig:ex-clocks-twitter-stats}: Statistics of the clocks, their mapping to nodes and \% improvement towards the final solution. 
}
\label{fig:ex-clocks-T}
\end{figure}

\begin{figure*}[ht]
\centering
\subfigure[][Quality]
{
 \centering
  \includegraphics[width=0.22\textwidth]{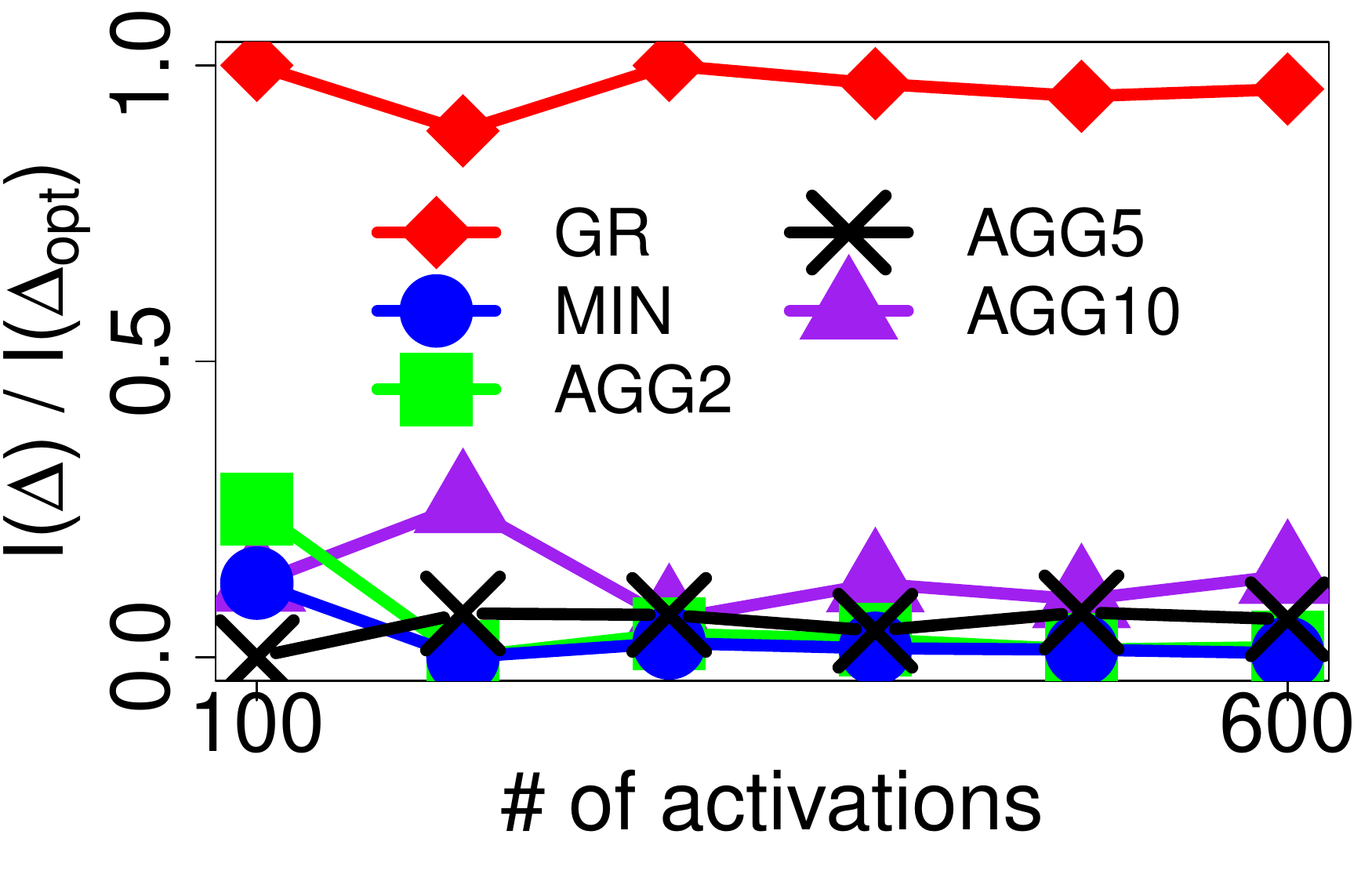}
  \label{fig:q-sin-syn}
}
\subfigure[][Time]
{
\centering
  \includegraphics[width=0.22\textwidth]{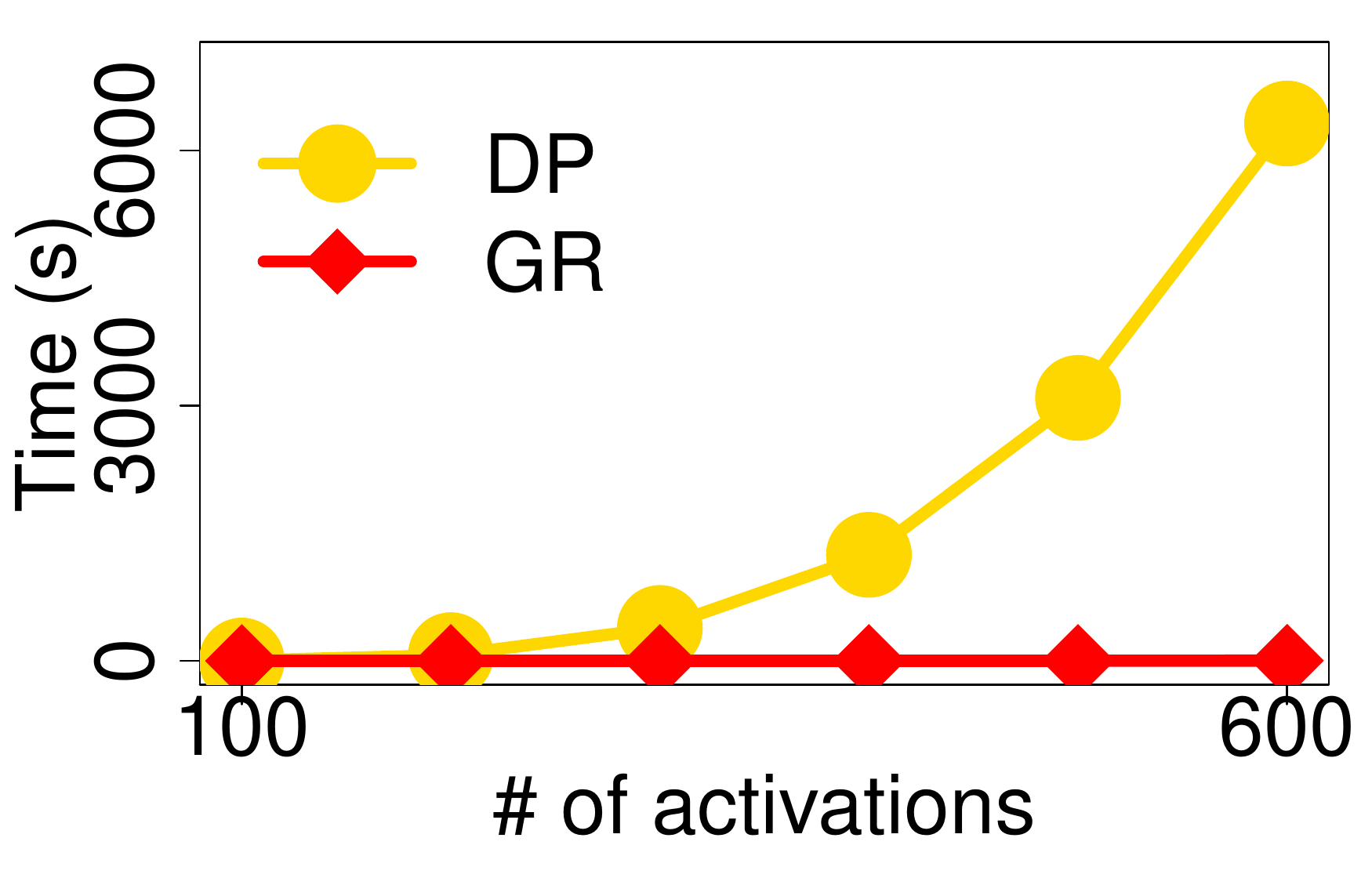}
  \label{fig:t-sin-syn}
}
\subfigure[][Quality kOC]{
\centering\hspace{-0.1in}
  \includegraphics[width = 0.22\textwidth]{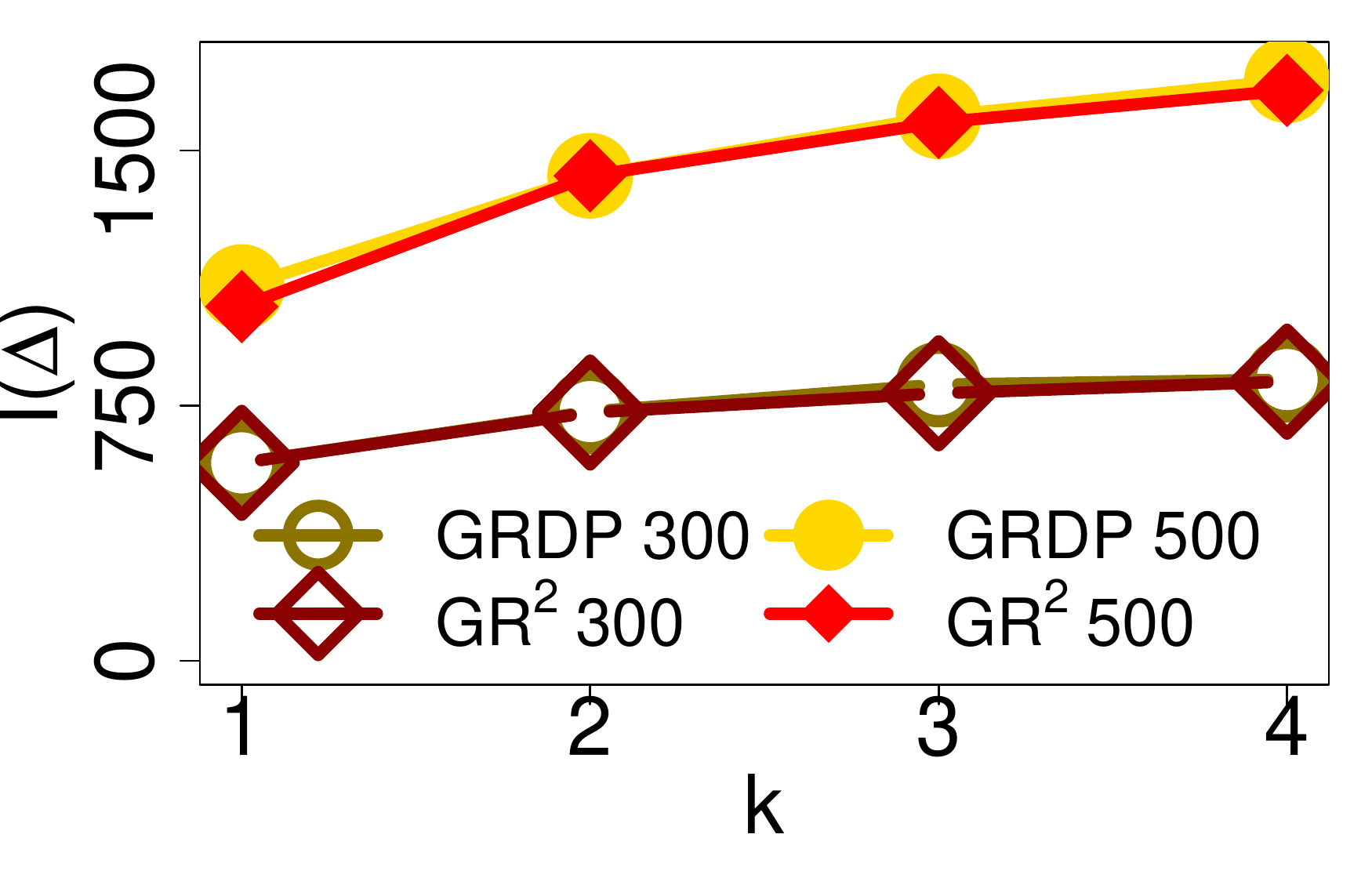}
  \label{fig:q-mul-syn}
}
\subfigure[][Time kOC]
{
\centering
  \includegraphics[width=0.22\textwidth]{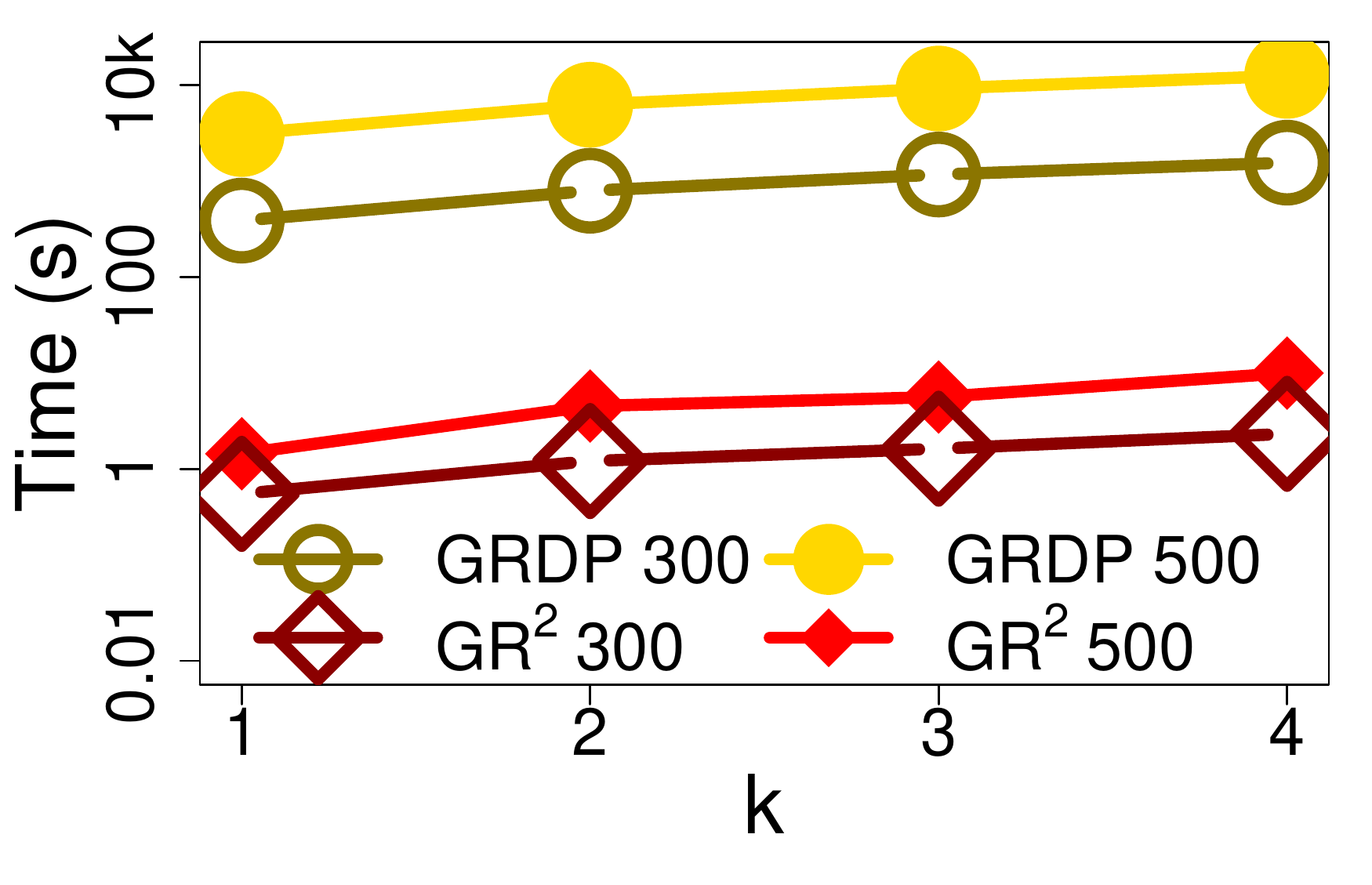}
  \label{fig:t-mul-syn}
}\vsb
\caption{Solution quality and execution time for synthetic data. ~\subref{fig:q-sin-syn} Comparison against the optimal improvement score $I(\Delta_{opt})$ for the clock found by \gr and several homogeneous clocks. \subref{fig:t-sin-syn} Comparison of execution time for \dmp and \gr. \subref{fig:q-mul-syn},\subref{fig:t-mul-syn} Solution quality $I(\Delta)$ and execution time vs number of clocks discovered by \grdp and \grsq on two datasets, one with 300 activations and one with 500 activations. Note that $I(\Delta)$ increases with more activations, as there are more possiblities for improvement over the baseline clock $\Delta_{max}$.}
\label{fig:synthetic-oc}
\end{figure*}

We first evaluate our solutions and discuss obtained clocks in real and synthetic data and than move on to several applications. The clocks obtained by applying \grsq ($k=3$) to the Twitter network are presented in Fig.~\ref{fig:ex-clocks-T}. 
The very first clock $\Delta_0$ exhibits a relatively more regular partitioning of the timeline compared to the following clocks (Fig.~\ref{fig:ex-clocks-twitter}). Intuitively, it captures the best overall time partitioning that best explains all underlying cascades. The following clocks ``improve'' the resolution for activations of certain nodes in the network. About $25\%$ of the nodes get assigned to the second clock $\Delta_1$ and less than a tenth of a percent of the nodes to clock $\Delta_2$ (Tbl.~\ref{fig:ex-clocks-twitter-stats}). Similar diminishing returns are also observed in the percentage of improvement to the overall likelihood contributed by each clock (Col: $\%I$ in Tbl.~\ref{fig:ex-clocks-twitter-stats}). 

\noindent{\bf Scalability and quality.} We investigate the relative quality of the clocks obtained by \gr and various homogeneous aggregations of time compared to the optimal solution $\Delta_{opt}$ found with \dmp for a single clock (\oc) in terms of relative likelihood improvement $I(\Delta)/I(\Delta_{opt})$ in Fig.~\ref{fig:q-sin-syn}. \gr's improvement consistently exceeds $90\%$ of the optimal. We also plot the results for homogeneous time aggregations of fixed windows of $1$ through $10$ original time steps. As expected, no homogeneous clock will work as well because activation rates vary with time, requiring a varying data-driven temporal resolution. When the temporal resolution is too high, the cascades are fragmented; while when it is too low, many activations appear spontaneous. Only by allowing the resolution to change with time (i.e. \oc solutions) do we obtain an effective model of cascade behavior.
\gr's near-optimal solution quality is critical, since \dmp algorithm does not scale to large instances, namely its execution time exceeds $1.5h$ on even modest instance sizes Fig.~\ref{fig:t-sin-syn}. In contrast, \gr scales nearly linearly with the number of activations and discovers high-quality clocks in real-world data with as many as one million activations (see Fig.~\ref{fig:t-sin-real}).

\begin{figure}[t]
\centering
\subfigure[][Flickr-b Quality]
{
\centering
  \includegraphics[width=0.13\textwidth]{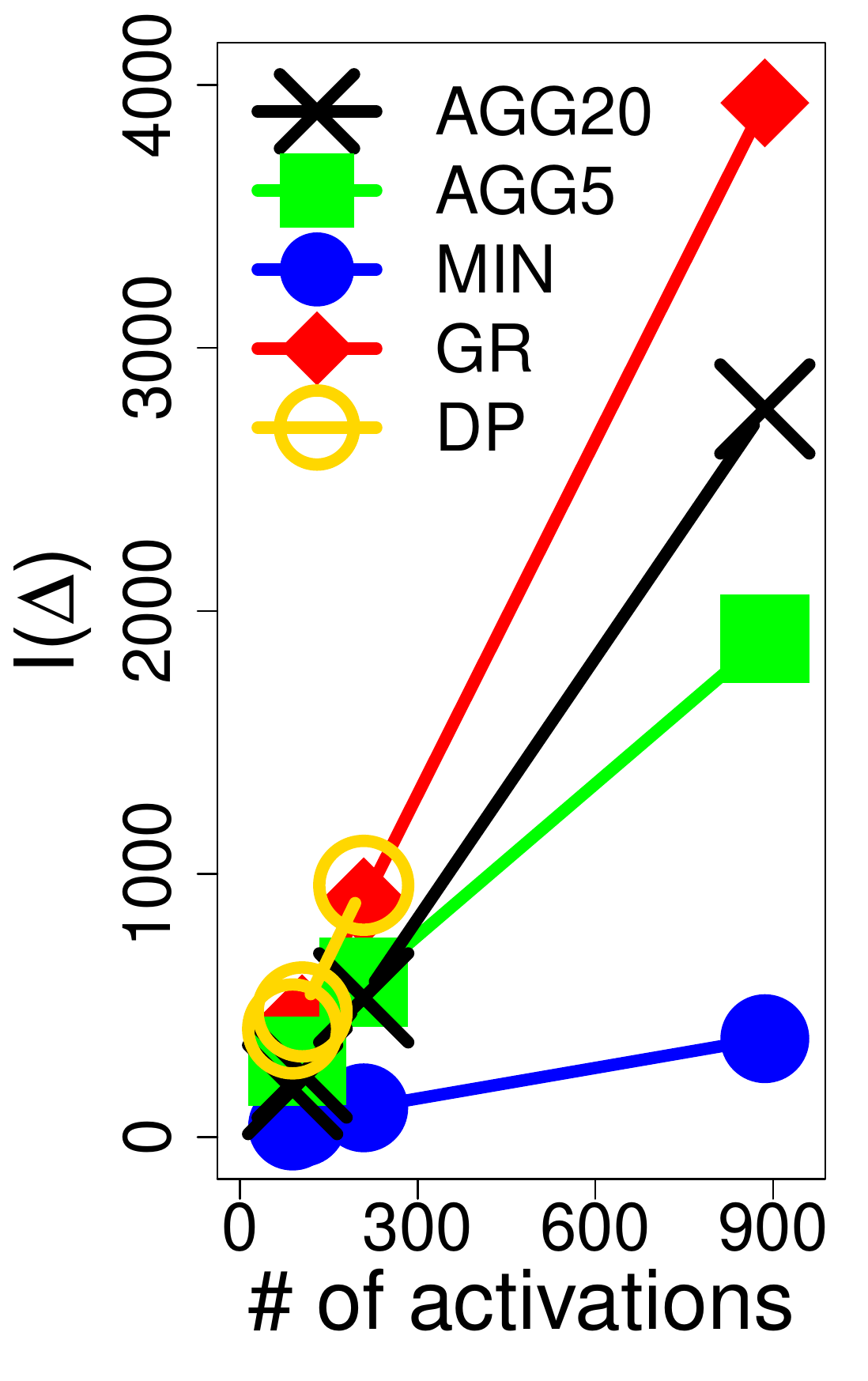}
  \label{fig:q-sin-flickr}
}
\subfigure[][Reg. Clock Quality]
{
\centering
  \includegraphics[width=0.13\textwidth]{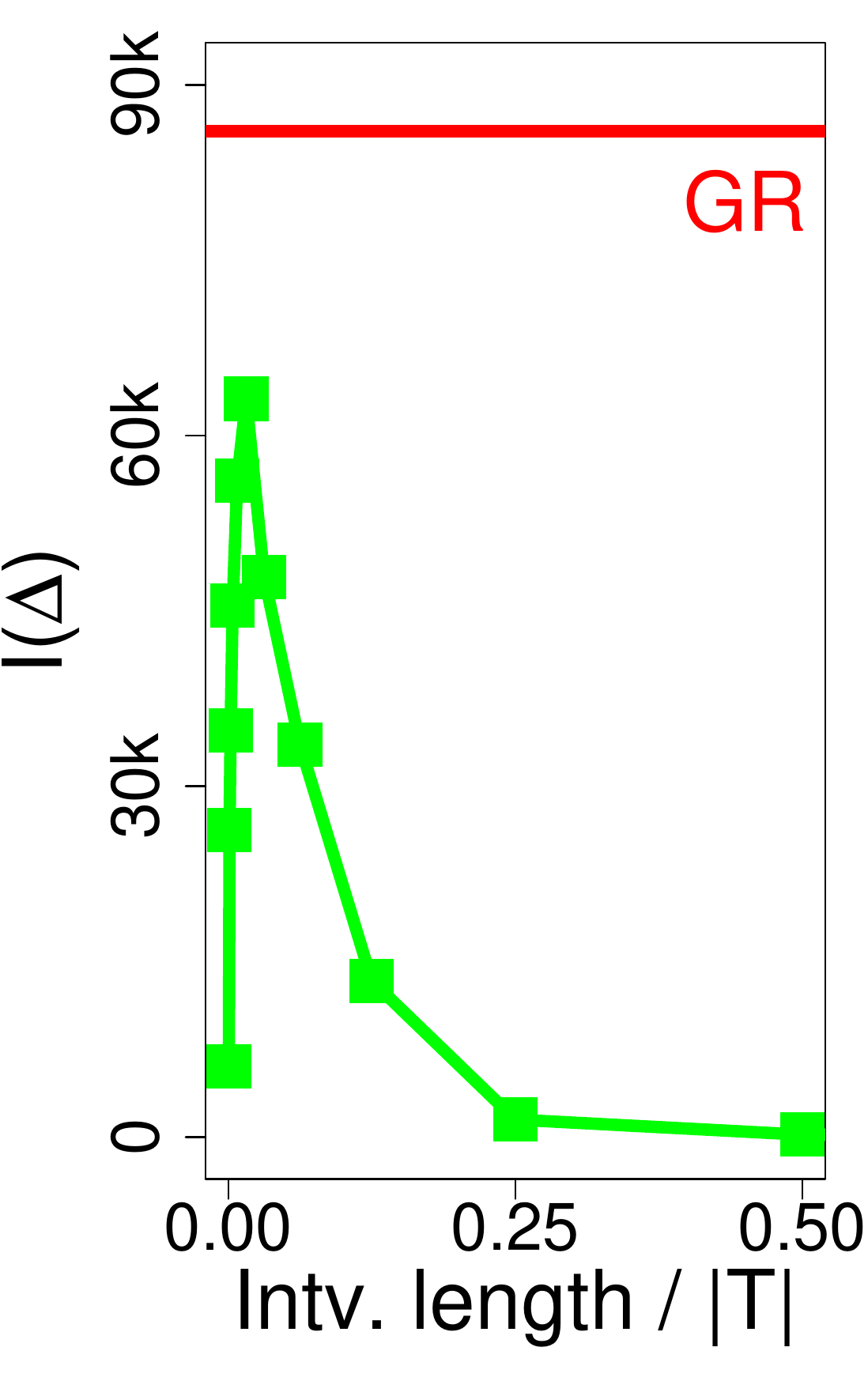}
  \label{fig:reg-cl-flickr}
}
\subfigure[][Time]{
\centering\hspace{-0.1in}
  \includegraphics[width=0.13\textwidth]{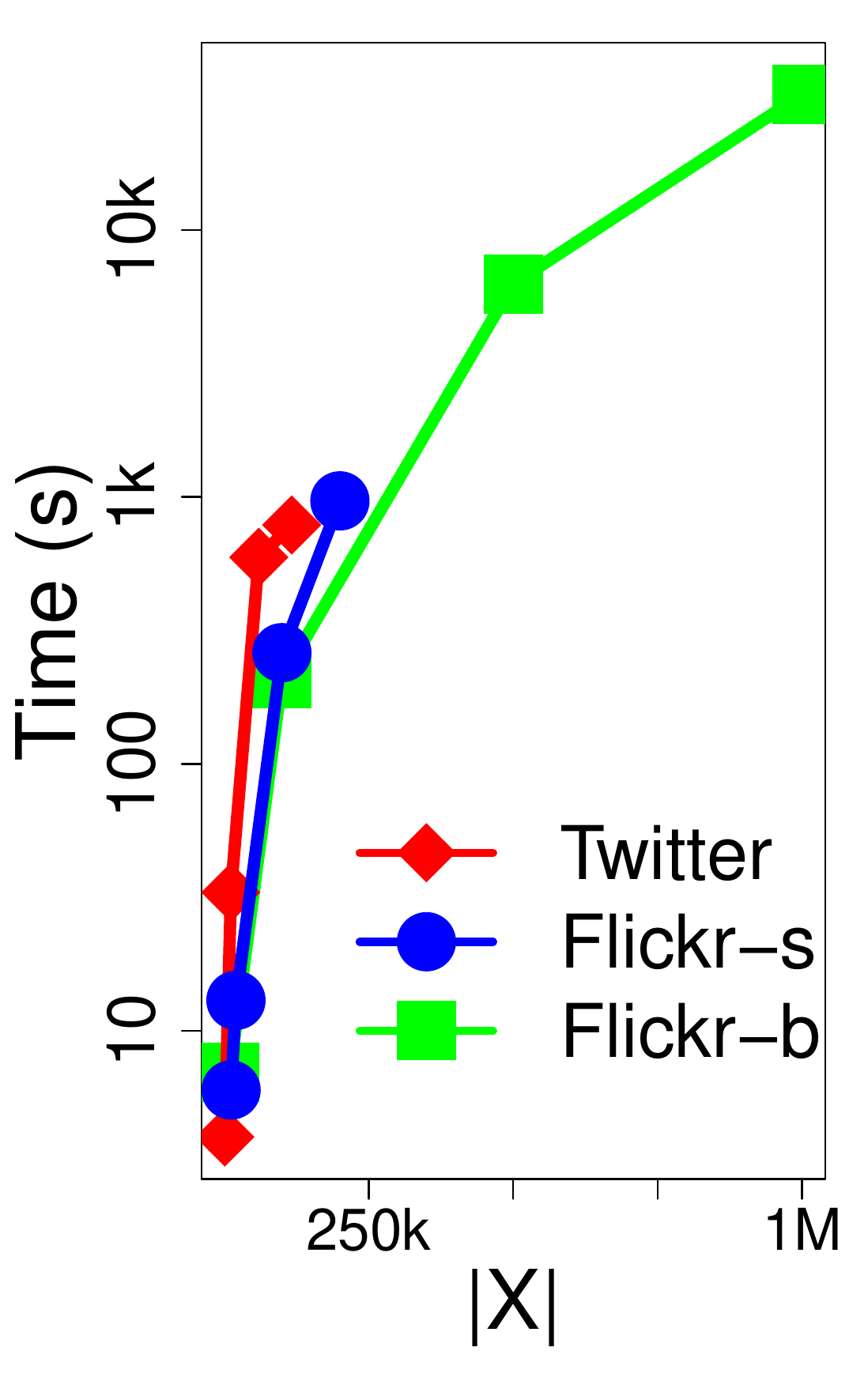}
  \label{fig:t-sin-real}
}
\vsb
\caption{\footnotesize \subref{fig:q-sin-flickr}: Likelihood improvement of \gr compared to \dmp and homogeneous clocks (AGGX, MIN) in Flickr. Similar quality trends persist for \grsq versus \grdp in Twitter and using multiple clocks (omitted due to space limitation).
\subref{fig:reg-cl-flickr}: Comparison of many different homogeneous clocks to \gr using $19k$ activations on Flickr-s. \subref{fig:t-sin-real}: Execution time for detecting a single clock in all three real datasets. Multiple clock detection (omitted) shows an increase in runtime that is linear in $k$.}
\vsc
\label{fig:real}
\end{figure}

In the multi-clock (kOC) setting, we observe similar trends for \grdp and \grsq. While \grdp provides a guarantee of a $(1-1/e)$-approximate solution, its reliance on \dmp limits its scalability to large instances. As observed in Figs.~\ref{fig:q-mul-syn},\ref{fig:t-mul-syn}, the quality of \grsq's solutions closely match those of \grdp with more than an order of magnitude execution speed-up.

\begin{figure*}[ht]
\centering
\subfigure[][Synthetic Success stretch]
{
 \centering
  \includegraphics[width=0.219\textwidth]{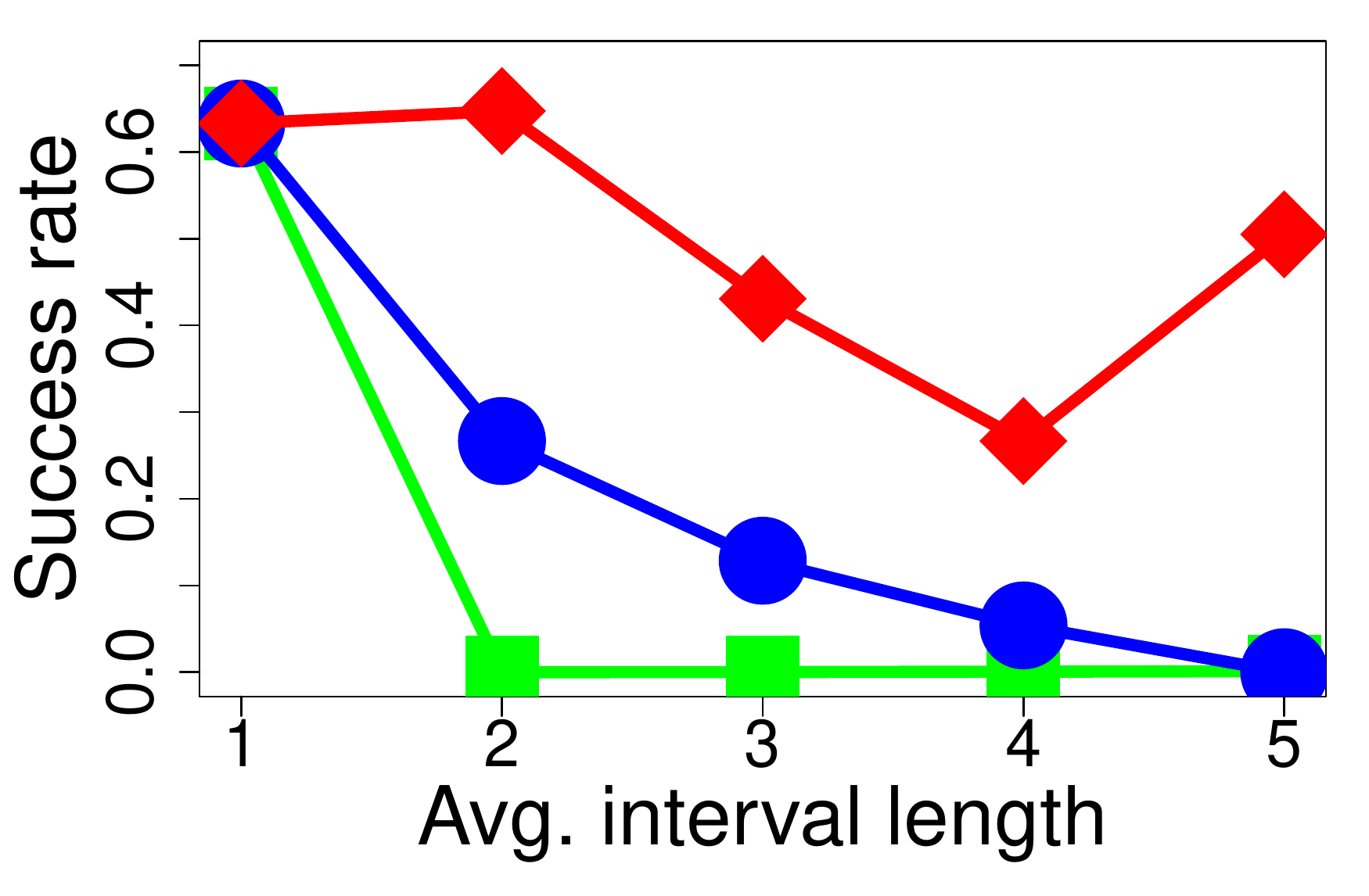}
  \label{fig:succ-stretch}
}
\subfigure[][Synthetic $F_1$ stretch]
{
\centering
  \includegraphics[width=0.219\textwidth]{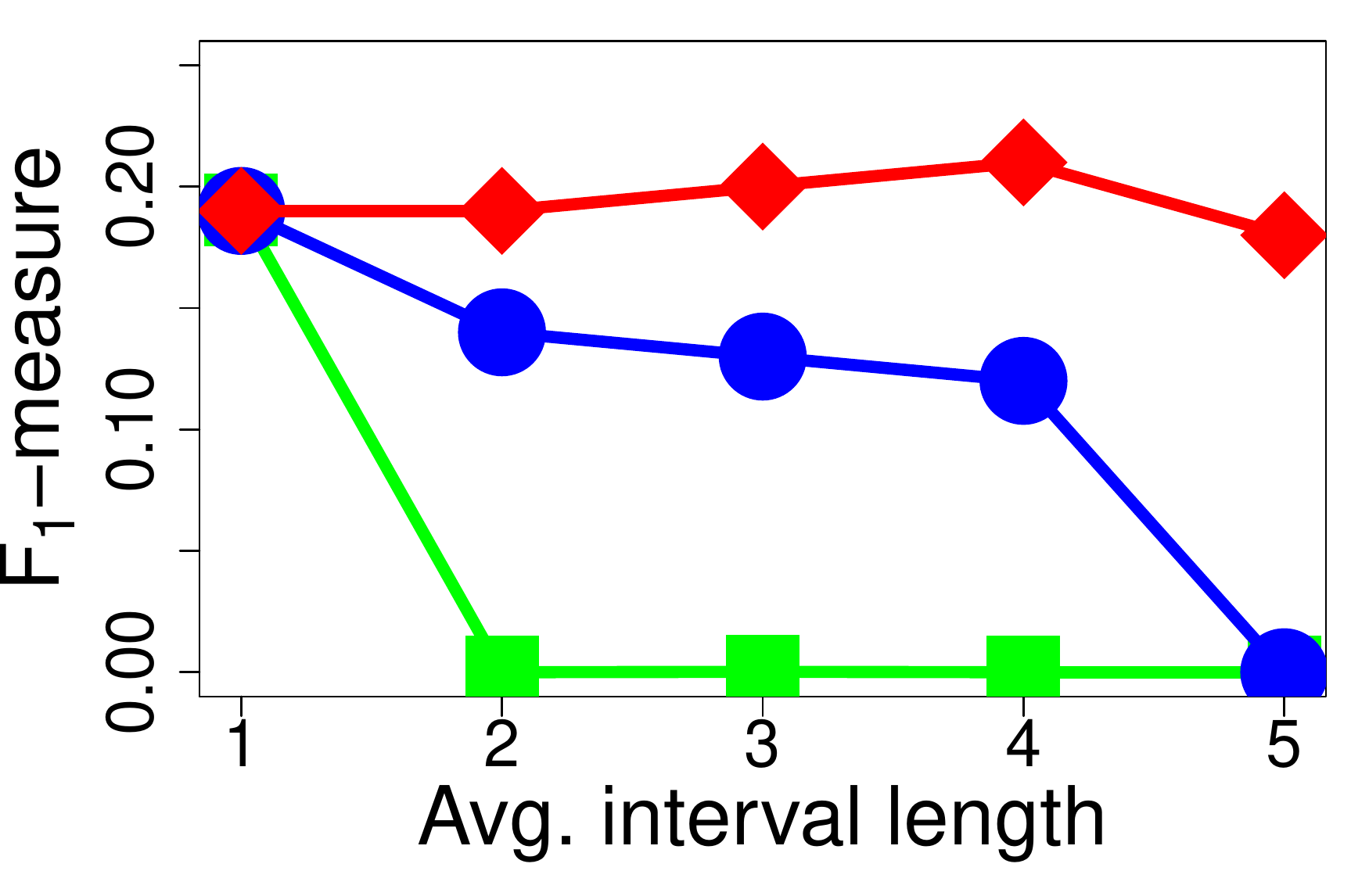}
  \label{fig:f-stretch}
}
\subfigure[][Synthetic success]{
\centering\hspace{-0.1in}
  \includegraphics[width = 0.219\textwidth]{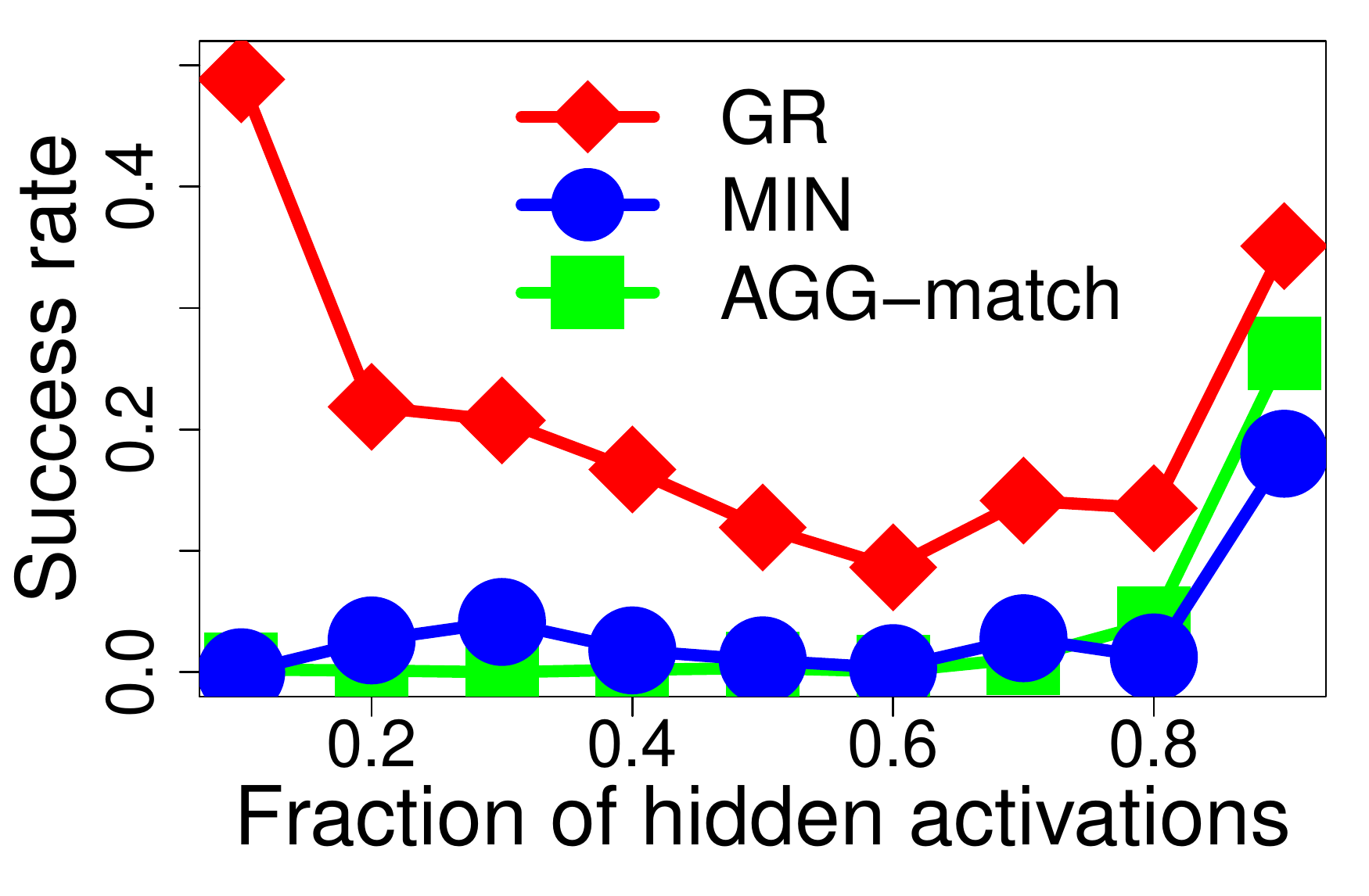}
  \label{fig:succ-drop}
}
\subfigure[][Synthetic $F_1$]
{
\centering
  \includegraphics[width=0.219\textwidth]{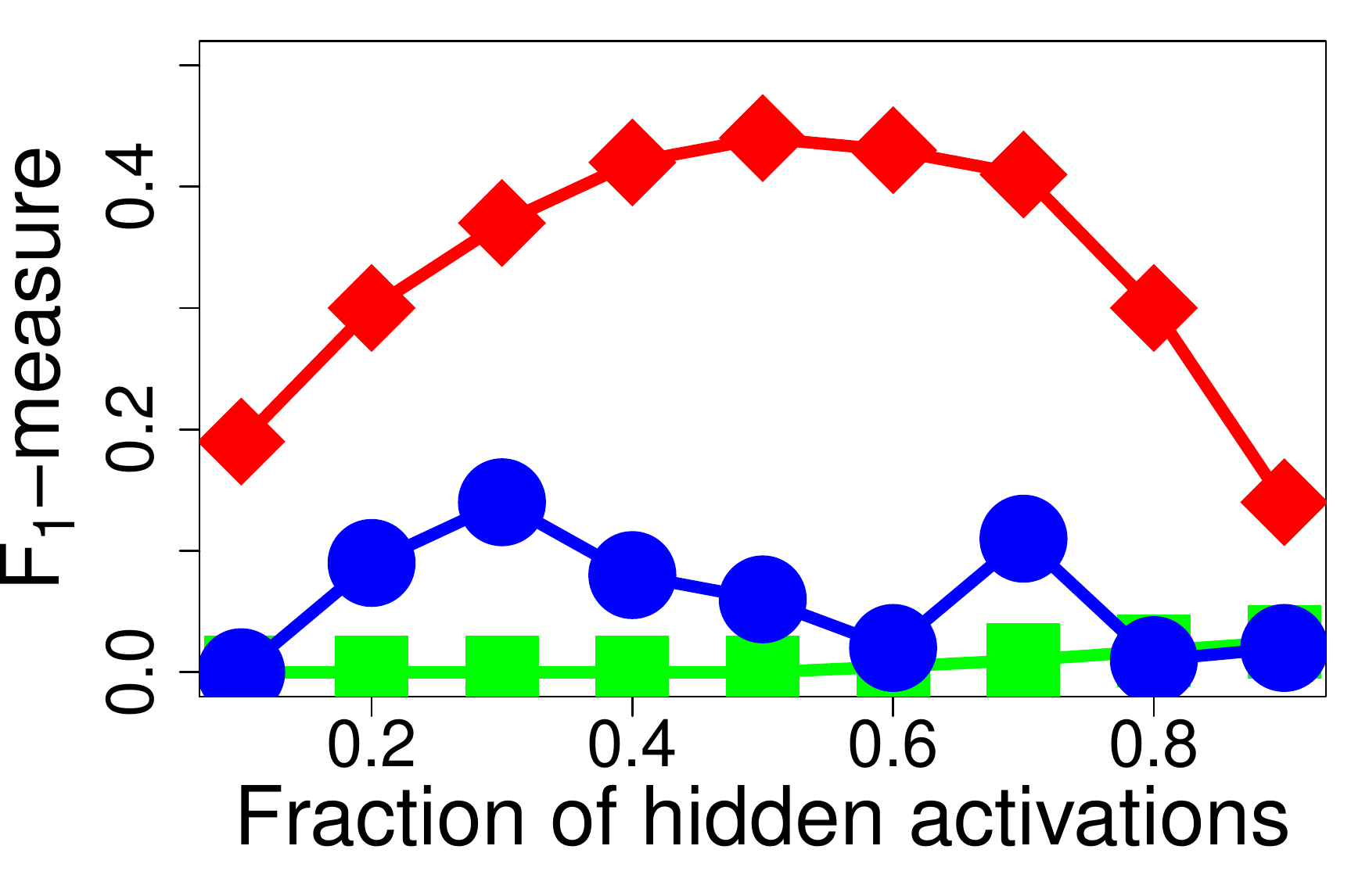}
  \label{fig:f-drop}
}
\subfigure[][Twitter success]{
\centering
  \includegraphics[width = 0.219\textwidth]{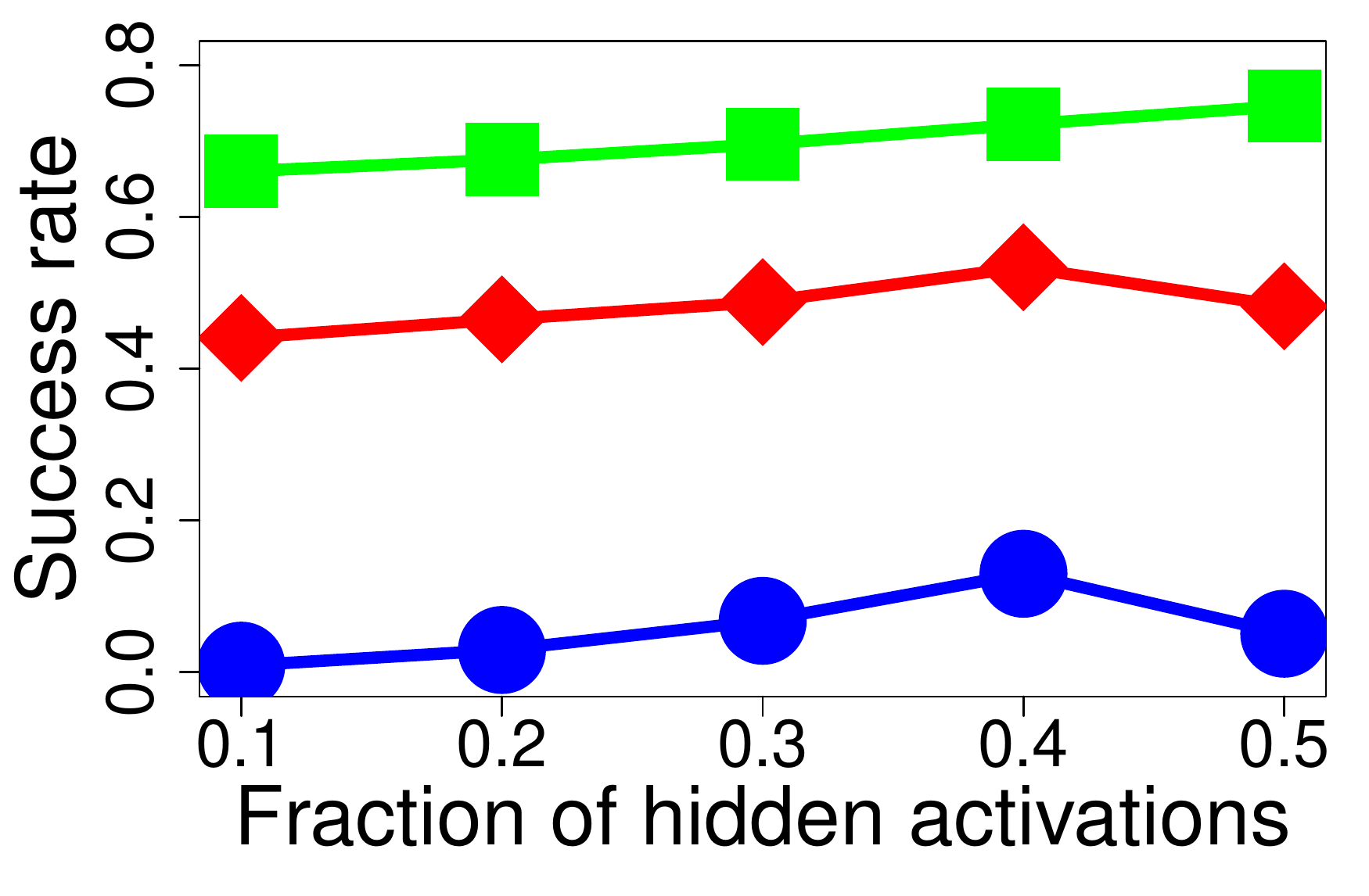}
  \label{fig:succ-drop-twitter}
}
\subfigure[][Twitter recall]
{
\centering
  \includegraphics[width=0.219\textwidth]{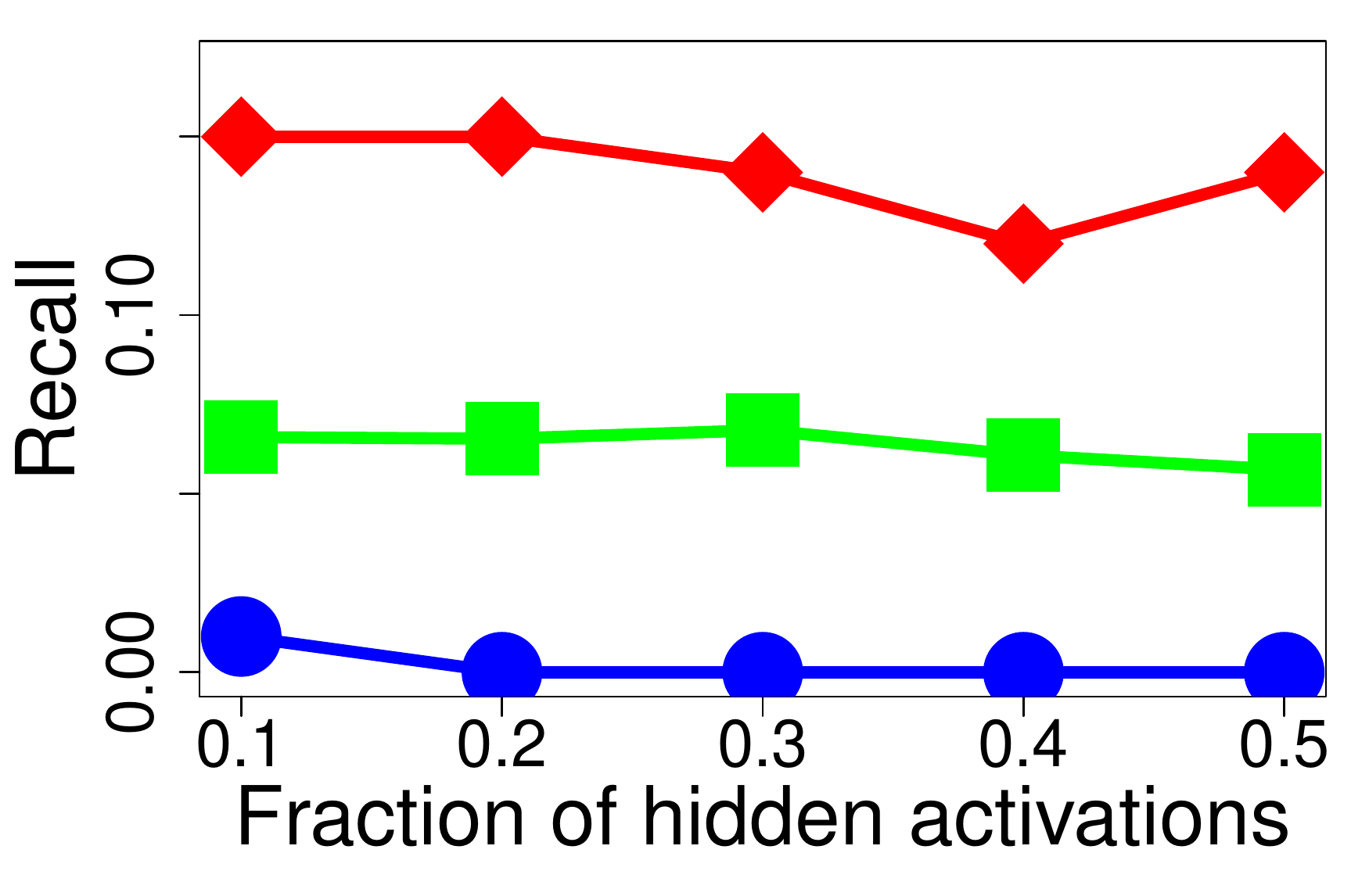}
  \label{fig:recall-drop-twitter}
}
\subfigure[][Flickr success]{
\centering\hspace{-0.1in}
  \includegraphics[width = 0.219\textwidth]{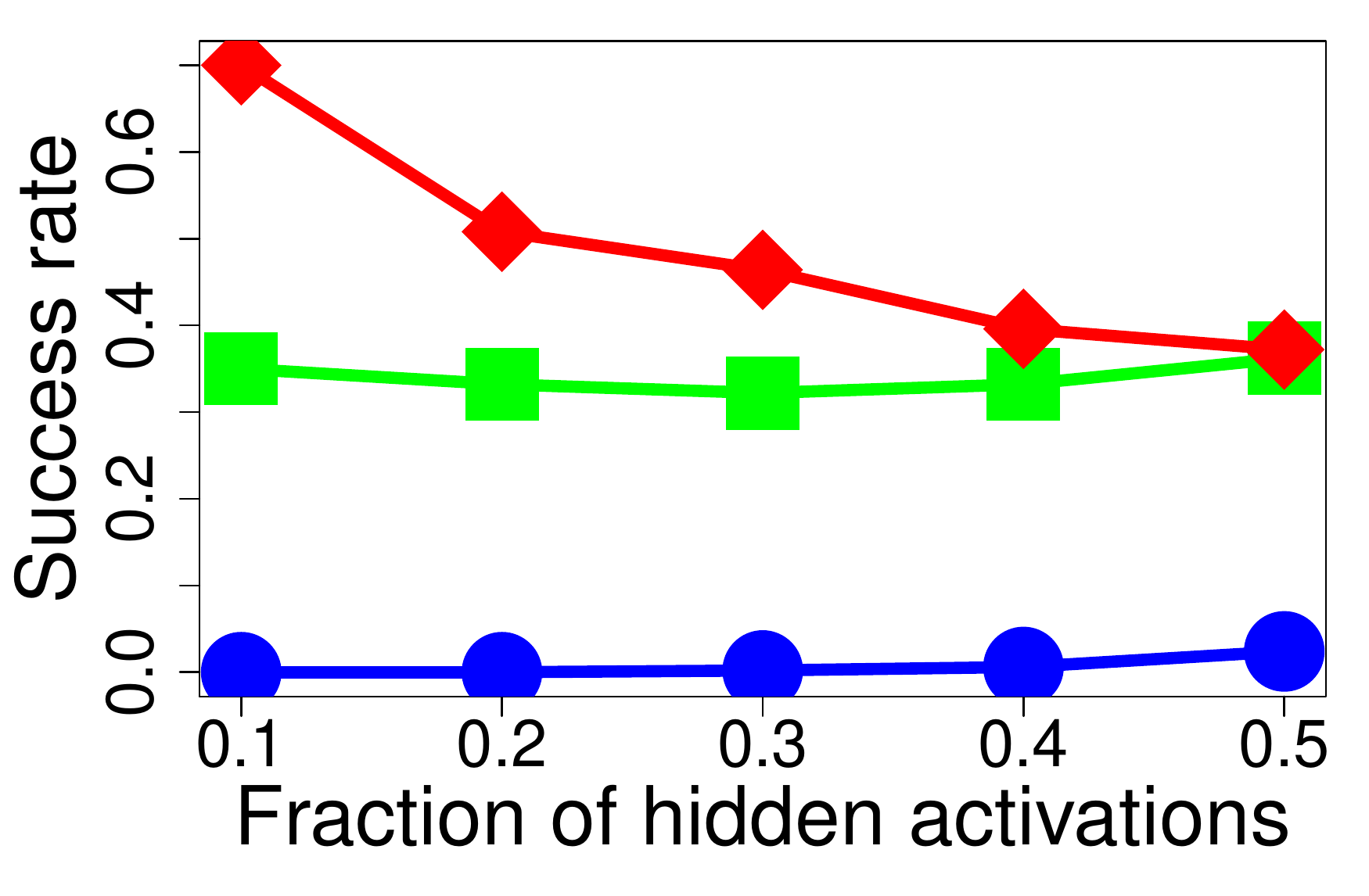}
  \label{fig:succ-drop-flickr}
}
\subfigure[][Flickr recall]
{
\centering
  \includegraphics[width=0.219\textwidth]{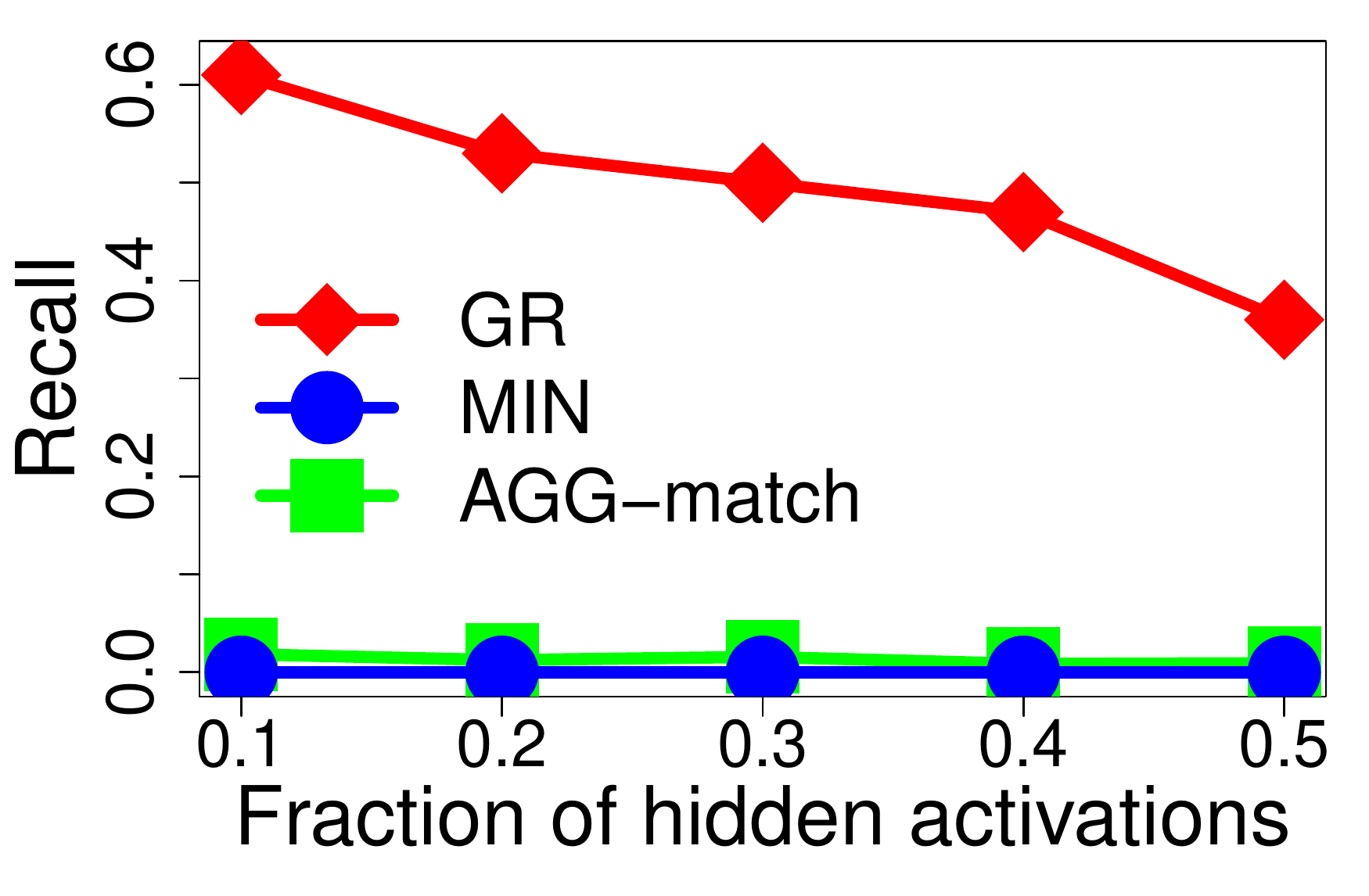}
  \label{fig:recall-drop-flickr}
}\vsb

\caption{Comparison of solution quality for the cascade completion problem in \protect\cite{zong2012inferring}. $MIN$ are results using the original timings ($\Delta_{min}$). $GR$ are results after remapping times according to the clock found by our greedy algorithm. $AGG$-$match$ are results after remapping times onto a homogenous clock with the same number of intervals as the one found by $GR$. (a)-(d) show synthetic data; (e)-(h) show real-world data.}
\label{fig:completion} \vsc
\end{figure*}
We perform similar evaluation on Twitter and Flickr data as well. The quality trends for Flickr-b for a single clock discovery are presented in Fig.~\ref{fig:q-sin-flickr}. Ideally one should compare $I(\Delta)$ to $I(\Delta_{opt})$ (found using \dmp), as we did in Fig.~\ref{fig:q-sin-syn}; however, because these networks are larger, this was impractical for all but the smallest instances. \gr's improvement score closely matches that of \dmp up to $200$ activations and persistently dominates fixed-width aggregations for larger instances. Similar behaviour in terms of quality is observed for Twitter \oc and for both datasets for multiple clock discovery \koc (figures omitted due to space constraints). 

Fig.~\ref{fig:t-sin-real} shows the runtime for \gr on our three real-world data sets. Of particular interest is the fact that Twitter, though a much smaller instance, requires greater execution time for an equivalent number of activations than Flickr. This is because running time of \gr is affected by the network density and our Twitter network is much denser. Sparser data (e.g. using @mentions instead of follows) would likely allow for much faster execution. While the largest size of Flickr-b requires several hours to extract a clock, it is worth mentioning that it processes a million activations in a $8.4m$-edge network with effectively $10s$ of millions of active edges. For kOC, the trends are the similar to those in synthetic data (omitted due to space). Most notably, \grsq remains linear in $k$, since it performs the same call to \dmp for each round of clock discovery.

\noindent{\bf Sensitivity to $p_n$ and $p_e$.} IC' propagation $p_n$ and spontaneous activation $p_e$ probabilities are the only two parameters for our algorithms. Hence, a natural question is: \emph{How stable is the clock w.r.t. these probabilities?} In practice $p_e$ and $p_n$ have minimal impact on the produced clocks for large ranges of their settings. We performed both synthetic and real-world data experiments that systematically vary both parameters, namely $\frac{p_e}{p_n}$ taking values between $0.001$ (i.e. spontaneous activations are very unlikely) and $1$ (i.e. activations are equally driven by external sources and the network). The clocks detected by our algorithms in synthetic data were identical in all cases where $p_e<p_n$; only when the parameters were equal was there a change in the optimal clock. For real-world data, there was more sensitivity with some changes appearing at $p_e \geq 0.2p_n$; however, even in these cases the obtained clocks were largely unchanged. Note that large settings of $p_e$ (comparative to $p_n$) attempts to fit to a diffusion process that is predominantly driven by external forces due to the sheer number of nodes, and hence the network structure's effect is minimal.

\subsection{Applications of optimal clocks} 
Next we apply detected clocks for improved missing cascade data and cascade size and topic prediction.

\noindent{\bf Inferring missing cascade activations.} Due to limitations on API usage of OSNs or privacy restrictions, only a partial observation of the diffusion process are typically available~\protect\cite{zong2012inferring,chen2015full,sadikov2011correcting}. Hence, in order to accurately detect information sources, estimate virality along edges or maximize the influence of campaigns it is important to infer unobserved cascade participants based on the network structure and the partially available cascade information. Zong et al.~\protect\cite{zong2012inferring} study this problem in the context of IC and propose approximate algorithms for the NP-hard problem of optimal cascade reconstruction. 

To illustrate the importance of using an appropriate network clock for this application, we adopt the \emph{WPCT} algorithm~\protect\cite{zong2012inferring} to infer missing activations in both synthetic and real-world data, while comparing the accuracy due to the original and optimal clocks. 
\emph{WPCT} attempts to reconstruct a feasible cascade tree based on partial observed activations. A cascade tree is feasible if the distance from a cascade's source to every observed activation is equal to the time of activation (assuming the source is activated at $t=0$). To allow for cascades with multiple sources, we use a modified version of {\em WPCT\em} that returns a forest so long as all observed activations are reachable from one of the cascade sources. Because this is an NP-hard problem~\protect\cite{zong2012inferring}, the algorithm may fail to find such a forest even when one exists. Working with an appropriate timeline, thus become critical for the algorithm and the completion applications overall.

The \emph{success rate} of {\em WPCT\em} in finding a viable forest (measured as the fraction of cascades that obtained feasible trees) is one measure of performance and the quality of predicted missing activation a second one. To test the algorithm, we begin with complete cascades and then hide activations, removing them from the data provided to {\em WPCT\em}. Each activation in a cascade has a fixed probability of being removed independent of all other activations, except that the first and last activation of each cascade are always kept since the algorithm has no chance of discovering them. Considering these hidden activations as true positives and any other non-observed nodes as false positives, we measure precision and recall for obtained solutions.

To generate synthetic data for this experiment, random cascades with at least 30 activations were created on a fixed, scale-free network. To emulate the delays in a real-world network, these cascades were then stretched: a random clock with a known average interval length was generated, and then each cascade had its times remapped onto that clock. For example, if the third interval in a clock was $\intv{5}{8}$, then an activation at $t=3$ would be remapped randomly onto a time between $5$ and $8$, inclusive. The results on synthetic data are shown in Figs.\ref{fig:succ-stretch}-\ref{fig:f-stretch}. Each figure shows three traces: the result after applying the clock found by \gr, the result after applying a homogeneous clock with the same number of intervals as the \gr clock (AGG-match), and the baseline (MIN) using the original timeline ($\Delta_{min}$). Using the \gr clock greatly increases the success rate when cascades have been stretched, which would be the case when the temporal resolution is too high. It also outperforms both alternatives for any rate of hiding activations from the algorithm. The spike in success rate at extremely high hidden-activation rates (Fig.~\ref{fig:succ-drop}) stems from the removal of constraints. To complete a cascade tree, one of simply needs to recover any path from source to the few remaining activations of appropriate length. The prediction of missing nodes (F1 measure), however, decreases for large drop rate due to the large number of possible candidates Fig.~\ref{fig:f-drop}. Essentially, while the recall for such cases is still high, precision decreasses drastically.

In real-world data, we only vary the probability of hiding activations. Both networks are quite dense, resulting in huge number of candidates for completion and thus high false positive rates in all cases; hence, we only plot recall instead of $F_1$-measure. The Twitter network is so dense that the homogeneous clock outperforms \gr on the success rate measure (Fig.~\ref{fig:succ-drop-twitter}; there are so many possible paths that at a matching level of temporal resolution (same number of intervals as in a \gr clock, but homogeneous in length), it becomes very likely that a feasible tree can be found. Most of these solutions are, however, incorrect, as indicated by the much lower recall rate of AGG-match than that of \gr in Fig.~\ref{fig:recall-drop-twitter}. In Flickr's less-dense network, \gr outperforms both alternatives on both metrics (Figs.~\ref{fig:succ-drop-flickr}-\ref{fig:recall-drop-flickr}).

\begin{table}
\centering
\footnotesize
\begin{tabular}{|l|c|c|c|c|} \hline
    {\bf Data}& {\bf m}& {\bf F1 w/o Clock}& {\bf F1 w Clock}& {\bf \#casc.}\\ \hline 
    \multirow{3}{*}{{\bf Twitter}} &$10$&$0.67$&$0.67$&$1280$\\ 
        &$20$&$0.66$&{\bf0.71}&$664$\\ 
        &$30$&$0.7$&{\bf0.76}&$309$\\ \hline \hline
    \multirow{3}{*}{{\bf Flickr-b}} &$300\dots500$&$0.66$&$0.66$&$1177\dots271$\\ 
        &$600$&$0.62$&{\bf0.69}&$173$\\ 
        &$700$&$0.61$&{\bf0.67}&$94$\\ \hline
\end{tabular}
\caption{Cascade size prediction aided by detected clocks.} \label{tbl:sizep}\vsb
\end{table}

\noindent{\bf Predicting cascade size} based on initial observations is
\label{apl:size}
another relevant application we consider. Recently Cheng et al~\protect\cite{cheng2014can} considered multiple features of the $m$-prefix of observed cascades (the first $m$ activations) to predict the eventual size of cascades.
The authors observed that temporal features 
have highest impact on prediction accuracy~\protect\cite{cheng2014can}. Naturally, precisely these features will be most affected by inappropriate temporal aggregations of the timeline. For example, in cascades initiated late at night, observing a large time-lag to the morning activations might make the cascade prefix appear slow-growing and bias its prediction to limited future growth. Hence, we investigate the potential for improvement of size prediction based on learned network clocks. 

We employ only the temporal features proposed in~\protect\cite{cheng2014can} and set-up the same prediction problem: given the first $m$ activations, how likely is it that a cascade exceeds size $\alpha m, \alpha\geq 1$? We set $\alpha$ such that the considered cascades are balanced between the \emph{small} ($|X|<\alpha m$) and \emph{large} ($|X|\geq\alpha m$) class. In~\protect\cite{cheng2014can}, $\alpha$ was set to $2$ as this produced balanced prediction problems, however, the appropriate setting for our datasets is $\alpha\approx 1.5$, as our cascade size distributions have a different slant. We compare the F1-measure in cross-validation of the temporal features according to (i) the original times of activation (col. 3 in Tbl.~\ref{tbl:sizep}) and the same according to the clock obtained by \gr (col. 4 in Tbl.~\ref{tbl:sizep}), considering prefixes of increasing size $m$ in Twitter and Flickr.

In Twitter the optimal clock enables $7\%$ lift of the F1 measure for $m=20$ and $8\%$ for $m=30$, and performs on par with the original clock for shorter prefixes. A significant improvement due to the optimal clock in Flickr occurs for prefixes of size $600$ and higher. The cascades in Flickr tend to be very large and a big number of the original activations are disconnected in the network and hence do not allow the clock to boost the informativeness of the features. This might be due to many popular photos being featured, thus inducing non-network driven favorite markings. The optimal clock provides a significant prediction improvement ($m\geq 600$) and if combined with possible other features (tags,text) it might aide to even better performance for actual prediction. 

\noindent{\bf Topical consistency of clocks.} Do similar topics and information types spread similarly in the network? In this application, we are interested to quantify the level of consistency of the optimal clocks for information of the same type. We use clock-based cascade topic prediction quality as a proxy for clock consistency. We prepare two classification tasks listed in Tbl.~\ref{tbl:topicp}. In the first case we select URLs from liberal (Huffington post and the hill) and conservative (Breitbart and Foxnews) outlets such that the total number of activations match in each class and we predict (using leave-one-out validation) the accuracy of placing a cascade in its own category simply based on the likelihoods of the cascades according to the two clocks detected based on the remainder of the categories. We repeat the same for news URLs (Text) versus Visual content (photo/video).    

The accuracy in both settings is significantly above random ($50\%$ will be random due to the balanced number of activations per class and, thus, similar size clocks). This observation alludes to different content spreading according to different conserved clocks---a direction we plan to pursue in the future after obtaining a larger annotated corpus of simultaneous cascades. Similar evaluation is not possible in Flickr as the data is anonymized. It is also important to note that the clock likelihood is not the best predictor of cascade topic (e.g. content and users profiles would likely yield much higher quality), however, we use this task only as a proxy for clock consistency within topics/content. 
\begin{table}
\centering
\footnotesize
\begin{tabular}{|l|c|c|c|c|} \hline
   {\bf Comparison}& {\bf \#acts. per class}& {\bf min$|X|$}& {\bf \#casc.}& {\bf ACC}\\ \hline 
   { Liberal v.s. Conservative} &$5301$&$30$&$55$&$56.2\%$\\ \hline
   { Text v.s. Visual} &$15132$&$20$&$342$& $57.5\%$\\ \hline
\end{tabular}
\caption{Cascade topic prediction as a proxy for clock consistency.}
\label{tbl:topicp}
\end{table}

\vsb
\section{Related Work}
\vsa
To the best of our knowledge, we are the first to propose algorithms for model-driven {\em heterogeneous\em} time aggregations of dynamic network data on information cascades. Previously, varying time windows have been proposed by Yang et al~\protect\cite{yang2014finding} to mine progression stages in event sequences; however, this work assumes independent sequences and no network structure involved, while we are interested in network diffusion processes. The effect of {\em homogeneous\em} temporal aggregations of the timeline in dynamic networks is a better-studied ~\protect\cite{krings2012effects,caceres2011temporal,sulo2010meaningful,budka2012predicting}. Different fixed window sizes and their effect of the resulting smoothness of time series of network statistics was consisdered in ~\protect\cite{sulo2010meaningful,caceres2011temporal}.  
Significant differences in the structural features of networks formed by aggregating mobile phone call records at different resolutions was observed in~\protect\cite{krings2012effects}, and the effect of aggregation window size on predicting changes in the network was studied in~\protect\cite{budka2012predicting}.

All above problems clearly demonstrate the need to select an optimal aggregation for evolving networks. However, the focus in all of them is on smoothness of global or local network statistics and the proposed approaches select a homogeneous time scale, i.e. each aggregation window is the same length of time. We demonstrate in experiments that such homogeneous aggregations are inadequate for explaining observed information cascades, while our in-homogeneous network clock solutions are superior in terms of likelihood of observed cascades and for several applicatios.

Other approaches for cascade completion were also considered~\protect\cite{sadikov2011correcting}, as well as the problem of missing data when trying to {\em limit\em} cascades~\protect\cite{budak2011limiting}. Providing information about the optimal clock to such algorithms could potentially improve their overall performance similar to the ones we considered. Other applications to which clock data could be applied include influence maximization~\protect\cite{kempe2003maximizing} and estimation of the diffusion probabilities of IC~\protect\cite{saito2008prediction}.


\vsb
\section{Conclusion}
\vsa
We proposed network clocks: heterogeneous partitions of the timeline that best explain observed information propagation cascades in real-world networks with respect to the independent cascade model (IC). 
We showed an optimal dynamic programming solution \dmp for the single optimal clock (\oc) problem and a scalable greedy alternative \gr that matches (within $10\%$) \dmp's quality, but is able to detect clocks in an  $8.4$-million-edges Flickr network and more than a million activations of nodes within cascades. We demonstrated that the multi-clock problem (\koc) is NP-hard and proposed both a $(1-1/e)$-approximation \grdp and a fast and accurate heuristic \grsq. The detected clocks enable improved cascade size classification (up to $8\%$ F1 lift) and improved missing cascade data inference ($0.15$ better Recall) in Twitter and Flickr, compared to using the original time or fixed-window aggregations of time. 

{\footnotesize
\bibliographystyle{abbrvnat}
\bibliography{ref/career,ref/dynclust,ref/references} 

\begin{thebibliography}{20}
\providecommand{\natexlab}[1]{#1}
\providecommand{\url}[1]{\texttt{#1}}
\expandafter\ifx\csname urlstyle\endcsname\relax
  \providecommand{\doi}[1]{doi: #1}\else
  \providecommand{\doi}{doi: \begingroup \urlstyle{rm}\Url}\fi

\bibitem[Budak et~al.(2011)Budak, Agrawal, and El~Abbadi]{budak2011limiting}
C.~Budak, D.~Agrawal, and A.~El~Abbadi.
\newblock Limiting the spread of misinformation in social networks.
\newblock In \emph{ACM WWW}, 2011.

\bibitem[Budka et~al.(2012)Budka, Musial, and Juszczyszyn]{budka2012predicting}
M.~Budka, K.~Musial, and K.~Juszczyszyn.
\newblock Predicting the evolution of social networks: optimal time window size
  for increased accuracy.
\newblock In \emph{Proc. PASSAT SocialCom}, pages 21--30. IEEE, 2012.

\bibitem[Caceres et~al.(2011)Caceres, Berger-Wolf, and
  Grossman]{caceres2011temporal}
R.~S. Caceres, T.~Berger-Wolf, and R.~Grossman.
\newblock Temporal scale of processes in dynamic networks.
\newblock In \emph{IEEE ICDM Workshops}, 2011.

\bibitem[Cha et~al.(2009)Cha, Mislove, and Gummadi]{Cha09}
M.~Cha, A.~Mislove, and K.~P. Gummadi.
\newblock {A Measurement-driven Analysis of Information Propagation in the
  Flickr Social Network}.
\newblock In \emph{Proc. WWW}, Madrid, Spain, April 2009.

\bibitem[Chen et~al.(2015)Chen, Tong, and Ying]{chen2015full}
Z.~Chen, H.~Tong, and L.~Ying.
\newblock Full diffusion history reconstruction in networks.
\newblock In \emph{Proc IEEE Big Data}, 2015.

\bibitem[Cheng et~al.(2014)Cheng, Adamic, Dow, Kleinberg, and
  Leskovec]{cheng2014can}
J.~Cheng, L.~Adamic, P.~A. Dow, J.~M. Kleinberg, and J.~Leskovec.
\newblock Can cascades be predicted?
\newblock In \emph{Proceedings of the 23rd international conference on World
  wide web}, pages 925--936. ACM, 2014.

\bibitem[Goldenberg et~al.(2001{\natexlab{a}})Goldenberg, Libai, and
  Muller]{goldenberg2001talk}
J.~Goldenberg, B.~Libai, and E.~Muller.
\newblock Talk of the network: A complex systems look at the underlying process
  of word-of-mouth.
\newblock \emph{Marketing letters}, 12\penalty0 (3):\penalty0 211--223,
  2001{\natexlab{a}}.

\bibitem[Goldenberg et~al.(2001{\natexlab{b}})Goldenberg, Libai, and
  Muller]{goldenberg2001using}
J.~Goldenberg, B.~Libai, and E.~Muller.
\newblock Using complex systems analysis to advance marketing theory
  development: Modeling heterogeneity effects on new product growth through
  stochastic cellular automata.
\newblock \emph{Academy of Marketing Science Review}, 2001:\penalty0 1,
  2001{\natexlab{b}}.

\bibitem[Granovetter(1978)]{granovetter1978threshold}
M.~Granovetter.
\newblock Threshold models of collective behavior.
\newblock \emph{American journal of sociology}, pages 1420--1443, 1978.

\bibitem[Kempe et~al.(2003)Kempe, Kleinberg, and Tardos]{kempe2003maximizing}
D.~Kempe, J.~Kleinberg, and {\'E}.~Tardos.
\newblock Maximizing the spread of influence through a social network.
\newblock In \emph{Proc. ACM SIGKDD}, pages 137--146. ACM, 2003.

\bibitem[Krause and Golovin(2012)]{submodularityKrause}
A.~Krause and D.~Golovin.
\newblock Submodular function maximization.
\newblock In \emph{Chapter in Tractability: Practical Approaches to Hard
  Problems}. Cambridge University Press, 2012.

\bibitem[Krings et~al.(2012)Krings, Karsai, Bernhardsson, Blondel, and
  Saram{\"a}ki]{krings2012effects}
G.~Krings, M.~Karsai, S.~Bernhardsson, V.~D. Blondel, and J.~Saram{\"a}ki.
\newblock Effects of time window size and placement on the structure of an
  aggregated communication network.
\newblock \emph{EPJ Data Science}, 1\penalty0 (4):\penalty0 1--16, 2012.

\bibitem[Macropol et~al.(2013)Macropol, Bogdanov, Singh, Petzold, and
  Yan]{macropol2013act}
K.~Macropol, P.~Bogdanov, A.~K. Singh, L.~Petzold, and X.~Yan.
\newblock I act, therefore i judge: Network sentiment dynamics based on user
  activity change.
\newblock In \emph{Proc. of the 2013 IEEE/ACM Int'l Conf. on Advances in Social
  Networks Analysis and Mining}, pages 396--402. ACM, 2013.

\bibitem[Sadikov et~al.(2011)Sadikov, Medina, Leskovec, and
  Garcia-Molina]{sadikov2011correcting}
E.~Sadikov, M.~Medina, J.~Leskovec, and H.~Garcia-Molina.
\newblock Correcting for missing data in information cascades.
\newblock In \emph{Proc. of the fourth int'l conf. on Web search and data
  mining}, pages 55--64. ACM, 2011.

\bibitem[Saito et~al.(2008)Saito, Nakano, and Kimura]{saito2008prediction}
K.~Saito, R.~Nakano, and M.~Kimura.
\newblock Prediction of information diffusion probabilities for independent
  cascade model.
\newblock In \emph{Knowledge-based intelligent information and engineering
  systems}, pages 67--75. Springer, 2008.

\bibitem[Shamos and Hoey(1976)]{shamos1976geometric}
M.~I. Shamos and D.~Hoey.
\newblock Geometric intersection problems.
\newblock In \emph{Foundations of Computer Science, 1976., 17th Annual
  Symposium on}, pages 208--215. IEEE, 1976.

\bibitem[Sulo et~al.(2010)Sulo, Berger-Wolf, and Grossman]{sulo2010meaningful}
R.~Sulo, T.~Berger-Wolf, and R.~Grossman.
\newblock Meaningful selection of temporal resolution for dynamic networks.
\newblock In \emph{Proc. of the Eighth Workshop on MLG}, pages 127--136. ACM,
  2010.

\bibitem[Tagarelli and Interdonato(2015)]{tagarelli2015time}
A.~Tagarelli and R.~Interdonato.
\newblock Time-aware analysis and ranking of lurkers in social networks.
\newblock \emph{Soc. Net. Anal. and Mining}, 5\penalty0 (1):\penalty0 46, 2015.

\bibitem[Yang et~al.(2014)Yang, McAuley, Leskovec, LePendu, and
  Shah]{yang2014finding}
J.~Yang, J.~McAuley, J.~Leskovec, P.~LePendu, and N.~Shah.
\newblock Finding progression stages in time-evolving event sequences.
\newblock In \emph{Proc. of the 23rd int'l conf. on World Wide Web}, pages
  783--794. ACM, 2014.

\bibitem[Zong et~al.(2012)Zong, Wu, Singh, and Yan]{zong2012inferring}
B.~Zong, Y.~Wu, A.~K. Singh, and X.~Yan.
\newblock Inferring the underlying structure of information cascades.
\newblock In \emph{Data Mining (ICDM), 2012 IEEE 12th International Conference
  on}, pages 1218--1223. IEEE, 2012.

\end{thebibliography}
}

\end{document}